\newtheorem{theorem}{Theorem}
\newtheorem{remark}{Remark}
\newtheorem{corollary}{Corollary}
\renewcommand{\vec}[1]{\mathbf{#1}}
\def\blfootnote{\xdef\@thefnmark{}\@footnotetext}
\begin{document}
	
\title{Fluid Antenna Multiple Access with\\Simultaneous Non-unique Decoding in\\Strong Interference Channel} 

\author{Farshad~Rostami~Ghadi,~\IEEEmembership{Member},~\textit{IEEE}, 
             Kai-Kit~Wong,~\IEEEmembership{Fellow},~\textit{IEEE},
             Masoud~Kaveh,\\ H. Xu,~\IEEEmembership{Member},~\textit{IEEE}, 
             W.~K.~New,~\IEEEmembership{Member},~\textit{IEEE},
             F. Javier~L\'{o}pez-Mart\'{i}nez,~\IEEEmembership{Senior Member},~\textit{IEEE}\\ 
             and Hyundong Shin,~\IEEEmembership{Fellow, IEEE}
             \vspace{-3mm}
             }
\maketitle

\begin{abstract}
Fluid antenna system (FAS) is gaining attention as an innovative technology for boosting diversity and multiplexing gains. As a key innovation, it presents the possibility to overcome interference by position reconfigurability on one radio frequency (RF) chain, giving rise to the concept of fluid antenna multiple access (FAMA). While FAMA is originally designed to deal with interference mainly by position change and treat interference as noise, this is not rate optimal, especially when suffering from a strong interference channel (IC) where all positions have strong interference. To tackle this, this paper considers a two-user strong IC where FAMA is used in conjunction with simultaneous non-unique decoding (SND). Specifically, we analyze the key statistics for the signal-to-noise ratio (SNR) and interference-to-noise ratio (INR) for a canonical two-user IC setup, and subsequently derive the delay outage rate (DOR), outage probability (OP) and ergodic capacity (EC) of the FAMA-IC. Our numerical results illustrate huge benefits of FAMA with SND over traditional fixed-position antenna systems (TAS) with SND in the fading IC.
\end{abstract}

\begin{IEEEkeywords}
Delay outage rate, fluid antenna multiple access, fluid antenna system, interference channel, outage probability, simultaneous non-unique decoding.
\end{IEEEkeywords}

\maketitle
\blfootnote{The work of F. Rostami Ghadi, K. K. Wong, H. Xu, and W. K. New is supported by the Engineering and Physical Sciences Research Council (EPSRC) under Grant EP/W026813/1. The work of F. J. L\'{o}pez-Mart\'{i}nez is funded in part by Junta de Andalucia through grant EMERGIA20-00297, and in part by MICIU/AEI/10.13039/50110001103 through grant PID2023-149975OB-I00 (COSTUME).}
\blfootnote{\noindent F. Rostami Ghadi, K. K. Wong, H. Xu, and W. K. New are with the Department of Electronic and Electrical Engineering, University College London, London, UK. K. K. Wong is also affiliated with the Department of Electronic Engineering, Kyung Hee University, Yongin-si, Gyeonggi-do 17104, Korea. (e-mail: $\rm \{f.rostamighadi, kai\text{-}kit.wong, hao.xu,  a.new\}@ucl.ac.uk$).}
\blfootnote{\noindent M. Kaveh is  with the Department of Information and Communication Engineering, Aalto University, Espoo, Finland. (e-mail: $\rm masoud.kaveh@aalto.fi$).}
\blfootnote{\noindent F. J. L\'{o}pez-Mart\'{i}nez is with the Department of Signal Theory, Networking and Communications, Research Centre for Information and Communication Technologies (CITIC-UGR), University of Granada, 18071, Granada, Spain. (e-mail: $\rm fjlm@ugr.es$).}
 \blfootnote{\noindent H. Shin is with the Department of Electronic Engineering, Kyung Hee University, Yongin-si, Gyeonggi-do 17104, Korea (e-mail: $\rm hshin@khu.ac.kr$).}
\blfootnote{Corresponding author: Kai-Kit Wong.}

\section{Introduction}\label{sec-intro}
\subsection{Context and Literature Review}
\IEEEPARstart{I}{n the rapidly-evolving} landscape of wireless communication, the emergence of fluid antenna systems (FAS) represents a transformative leap forward \cite{new2024tutorial,Wong-2022fcn,Wang-aifas2024}. As a new form of reconfigurable antennas, FAS emphasizes on position and shape flexibility in antennas for wireless communications. FAS can be implemented by liquid-based antennas \cite{Huang-2021access}, radio-frequency reconfigurable pixels \cite{Rodrigo-2014,Jing-2022}, metamaterials \cite{Hoang-2021,Deng-2023}, mechanically movable antennas \cite{Zhu-Wong-2024} and etc. An easy way to understand the benefits of FAS is that it presents a new degree of freedom (DoF) to the physical layer that can be translated into diversity \cite{wong2020fluid} and multiplexing gains \cite{wong2021fluid}. In recent years, efforts on FAS have been spent onto improving the accuracy of channel model \cite{khammassi2023new,ramirez2024new}, and studying the diversity order in different fading channels such as Rayleigh \cite{New-2023}, Nakagami \cite{Vega-2023-2}, $\alpha$-$\mu$ distributed \cite{Alvim-2023} and even arbitrarily distributed \cite{ghadi2023copula}. Copulas were shown to be useful in the performance analysis of FAS \cite{ghadi2023gaussian}. Multiple fluid antennas at both ends were recently considered in \cite{new2023information}, revealing extraordinary diversity gains.

One interesting application for FAS is multiuser communications. In particular, a unique way of using FAS is to mitigate inter-user interference by seeking the antenna position where the interference suffers from deep fades, leading to the concept of fluid antenna multiple access (FAMA). By doing so, each receiver handles its interference by position flexibility of FAS without the requirement of precoding at the base station (BS) transmitter. The main advantage is a reduced burden at the BS for scalability in terms of both channel state information (CSI) estimation and optimization. Depending on how fast the FAS position (a.k.a.~`port') changes, FAMA can be classified into being {\em fast} \cite{wong2022fast} and {\em slow} \cite{wong2023slow,xu2024revisiting}. Opportunistic scheduling can also combine with FAMA for improved performance \cite{wong2023opportunistic,waqar2024opportunistic}. In \cite{chen2023energy}, a mean-field game was formulated to empower slow FAMA for energy-efficiency maximization.

Evidently, FAS can also be adopted as an additional DoF to elevate traditional beamforming approaches. For instance, \cite{xu2023capacity} considered an uplink scenario where the BS used traditional beamforming but the users employed multiple fluid antennas for joint port selection and beamforming in order to maximize the capacity. Moreover, FAS has been considered for integrated sensing and communications in \cite{Wang-fasisac2023,Zhou-2024,Zou-2024}. Dirty multiple access channels with FAS were also studied in \cite{ghadi2023fluid} when side information was assumed at the transmitters. The application of FAS has also extended to backscatter communications \cite{ghadi2024performance2} and wireless power transfer \cite{lin2024fluid}. Clearly, CSI is crucial to the performance of FAS and hence recent efforts have addressed the channel estimation problem for FAS in \cite{xu2023channel,Dai-2023}. Besides, machine learning has also been demonstrated to be effective in reducing the complexity for establishing the channels at the FAS ports \cite{waqar2023deep,Zhang-2024}. Finally, it is worth mentioning the recent experimental results in \cite{Shen-tap_submit2024,zhang2024pixel}, validating the capability of FAS in wireless communications systems.
	
\subsection{Motivation and Contributions}
While FAMA is attractive in simplifying how interference can be handled, it is not rate optimal or capacity achieving. In \cite{new2023achievability}, a two-user interference channel (IC) with FAS was studied and the Han-Kobayashi's achievable rate was investigated. It was revealed that FAMA without power splitting nor rate splitting could achieve a similar sum-rate to that with, if the size and resolution of FAS were sufficient. In other words, if FAS can find a port that has weak interference, then it will be optimal to treat interference as noise \cite{el2011network}. However, this is not the case when facing a strong IC in which FAS would be unable to find any port that has weak interference and therefore treating interference as noise is far from optimal.


Motivated by the above, in this paper, we aim to investigate the performance of a canonical IC where two transmitters wish to communicate to their respective receivers over a shared IC, and each of the receivers is equipped with FAS. Specifically, our interest is on the strong IC scenarios where joint decoding of both messages at both receivers is capacity optimal \cite{el2011network}. For this reason, we consider the use of simultaneous non-unique decoding (SND) \cite{Bidokhti2014}. This scheme enables receivers to decode multiple signals simultaneously, even when there is no clear separation between the desired signal and the interference.\footnote{This is particularly useful in scenarios in which users experience similar channel strengths or where interference is not suited to be effectively canceled by conventional methods such as successive interference cancellation (SIC). Unlike SIC, which decodes the strongest signal first and then cancels it out before decoding weaker ones, SND decodes multiple signals concurrently. This approach allows for more efficient use of the spectrum and expands the capacity region, especially in high-interference scenarios.} By integrating SND into the FAMA framework, we can enhance the system's ability to overcome interference, enabling each receiver to retrieve both its own message and those from the interfering transmitters without additional rate constraints. 

Our key contributions are summarized as follows:
\begin{itemize}
\item We formalize a definition of the two-user FAMA setup in the strong interference regime, and define its instantaneous capacity region. In this scenario, we derive the probability density function (PDF) and cumulative distribution function (CDF) of the signal-to-noise ratio (SNR) and interference-to-noise ratio (INR) at the receivers. 
\item Additionally, we derive closed-form expressions for the probabilistic measures such as outage probability (OP) and delay outage rate (DOR), in terms of the CDF of the multivariate normal distribution. We also provide asymptotic expressions in the high SNR regime, considering the SND interference management technique. 
\item We then derive an analytical expression for the ergodic capacity (EC) of the FAMA-IC with strong interference. This is obtained by approximating the expected value of the equivalent SNRs, i.e., the maximum of $N$ correlated random variables (RVs), using a heuristic approach.
\item Finally, we conduct numerical evaluations to assess how the implementation of FAS influences the performance of FAMA-IC under strong interference scenarios, highlighting its performance gain over the traditional fixed-position antenna system (TAS) counterpart.
\end{itemize}

\subsection{Organization}
The remainder of this manuscript is structured as follows: Section \ref{sec-sys} introduces the system model of FAMA-IC. The statistical characteristics of the equivalent channel, including the CDF and PDF, are then derived in Section \ref{sec-sta}. Section \ref{sec-per} presents the analytical results of key performance metrics such as OP, DOR, and EC. Section \ref{sec-num} provides the numerical results, and finally, Section \ref{sec-con} concludes the paper.

\begin{figure}[!t]
\centering
\includegraphics[width=0.9\columnwidth]{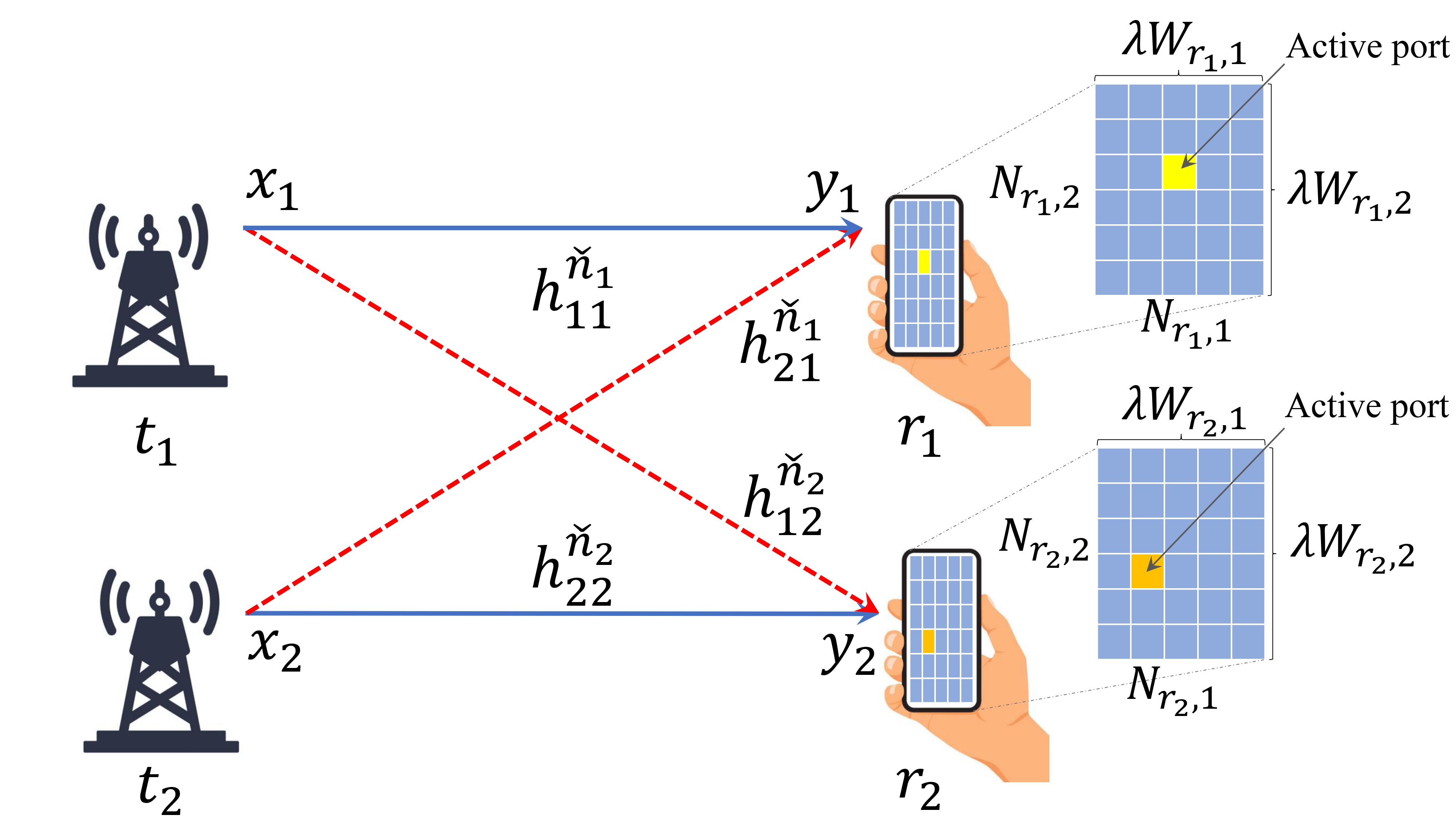}
\caption{System model for a two-user FAMA-IC.}\label{fig-model}
\end{figure}

\section{System Model}\label{sec-sys}
Consider a setup as shown in Fig.~\ref{fig-model}, where each transmitter $\mathrm{t}_i$, $i\in\left\{1,2\right\}$, sends inputs $x_i$ with transmit power $P_i$ to its respective receiver $\mathrm{r}_i$ over a shared IC. It is assumed that the transmitters $\mathrm{t}_i$ are equipped with a traditional fixed antenna while receivers $\mathrm{r}_i$ take advantage of a 2D planar fluid antenna that includes $N_{\mathrm{r}_i}$ preset ports which are uniformly distributed over an area of $W_{\mathrm{r}_i}$. To be more precise, we assume a grid structure for the fluid antenna so that $N_{{\mathrm{r}_i},l}$ ports are uniformly distributed along a linear space of length $\lambda W_{{\mathrm{r}_i},l}$ for $l\in\left\{1,2\right\}$, where $\lambda$ denotes the wavelength of the carrier frequency, i.e., $N_{\mathrm{r}_i}=N_{{\mathrm{r}_i},1}\times N_{{\mathrm{r}_i},2}$ and $W_{\mathrm{r}_i}=\lambda W_{{\mathrm{r}_i},1}\times \lambda W_{{\mathrm{r}_i},2}$. Moreover, in order to convert the $2$D indices to the $1$D index for each port (i.e., the $\left(n_{{\mathrm{r}_i},1},n_{{\mathrm{r}_i},2}\right)$-th port), we introduce an appropriate one-to-one mapping function as $\mathcal{F}\left(n_{\mathrm{r}_i}\right)=\left(n_{{\mathrm{r}_i},1},n_{{\mathrm{r}_i},2}\right)$ and $\mathcal{F}^{-1}\left(n_{{\mathrm{r}_i},1},n_{{\mathrm{r}_i},2}\right)=n_{\mathrm{r}_i}$ in which $n_{\mathrm{r}_i}\in\left\{1,\dots,N_{\mathrm{r}_i}\right\}$ and $n_{{\mathrm{r}_i},l}\in\left\{1,\dots,N_{{\mathrm{r}_i},l}\right\}$. Also, given that the fluid antenna ports can be arbitrarily close to each other, the corresponding channels are spatially correlated. Therefore, in a 3D isotropic scattering environment, the spatial correlation between the $n_{\mathrm{r}_i}$-th port and the $\tilde{n}_{\mathrm{r}_i}$-th port is given by \cite{new2023information}
\begin{align}
\varrho^{\mathrm{r}_i}_{n_{\mathrm{r}_i},\tilde{n}_{\mathrm{r}_i}}
&=\mathrm{cov}\left\{n_{\mathrm{r}_i},\tilde{n}_{\mathrm{r}_i}\right\}\notag\\
&=\mathcal{J}_0\left(2\pi\sqrt{\begin{array}{l}
\left(\frac{\vert n_{{\mathrm{r}_i},1}-\tilde{n}_{{\mathrm{r}_i},1}\vert}{N_{{\mathrm{r}_i},1}-1}W_{{\mathrm{r}_i},1}\right)^2\\
\quad\quad\quad+\left(\frac{\vert n_{{\mathrm{r}_i},2}-\tilde{n}_{{\mathrm{r}_i},2}\vert}{N_{{\mathrm{r}_i},2}-1}W_{{\mathrm{r}_i},2}\right)^2
\end{array}}\right),
\end{align}
where $\mathcal{J}_0\left(\cdot\right)$ is the spherical Bessel function of the first kind, and $\mathcal{F}\left(\tilde{n}_{\mathrm{r}_i}\right)=\left(\tilde{n}_{{\mathrm{r}_i},1},\tilde{n}_{{\mathrm{r}_i},2}\right)$ and $\mathcal{F}^{-1}\left(\tilde{n}_{{\mathrm{r}_i},1},\tilde{n}_{{\mathrm{r}_i},2}\right)=\tilde{n}_{\mathrm{r}_i}$ where  $\tilde{n}_{\mathrm{r}_i}\in\left\{1,\dots,N_{\mathrm{r}_i}\right\}$ and $\tilde{n}_{{\mathrm{r}_i},l}\in\left\{1,\dots,N_{{\mathrm{r}_i},l}\right\}$. Hence, the spatial correlation matrix $\mathbf{R}_{\mathrm{r}_i}$ is defined as
\begin{align}
\mathbf{R}_{\mathrm{r}_i}\triangleq\begin{bmatrix}
\varrho^{\mathrm{r}_i}_{1,1} & \varrho^{\mathrm{r}_i}_{1,2} &\dots& \varrho^{\mathrm{r}_i}_{1,N_{\mathrm{r}_i}}\\
\varrho^{\mathrm{r}_i}_{2,1} & \varrho^{\mathrm{r}_i}_{2,2} &\dots& \varrho^{\mathrm{r}_i}_{2,N_{\mathrm{r}_i}}\\ \vdots & \vdots & \ddots & \vdots\\
\varrho^{\mathrm{r}_i}_{N_{\mathrm{r}_i},1} & \varrho^{\mathrm{r}_i}_{N_{\mathrm{r}_i},2} &\dots& \varrho^{\mathrm{r}_i}_{N_{\mathrm{r}_i},N_{\mathrm{r}_i}}\end{bmatrix}.
\end{align}

Unlike TAS, the radiating element of each FAS receiver $\mathrm{r}_i$ can switch its location among $N_{\mathrm{r}_i}$ ports. To facilitate the notations, we define $\iota\in\left\{1,2\right\}$ as the complement of subscript $i$, so that if $i=1$, then $\iota=2$ and vice versa. Therefore, the received signal at the $n_{\mathrm{r}_i}$-th port of receiver $\mathrm{r}_i$ is given by
\begin{equation}
y^{n_{\mathrm{r}_i}}_{\mathrm{r}_i}=h^{n_{\mathrm{r}_i}}_{\mathrm{t}_i,\mathrm{r}_i}\sqrt{L_{i,i}}x_i+h^{n_{\mathrm{r}_i}}_{\mathrm{t}_{\iota},\mathrm{r}_i}\sqrt{L_{\iota,i}}x_{\iota}+z^{n_{\mathrm{r}_i}}_{\mathrm{r}_i},
\end{equation} 
where $z^{n_{\mathrm{r}_i}}_{\mathrm{r}_i}$ denotes the additive white Gaussian noise (AWGN) with zero mean and variance of $\sigma^2_i$ at the $n_{\mathrm{r}_i}$-th port of $\mathrm{r}_i$, and $L_{j,k}$ is the path loss experienced by the message from transmitter $j$ arriving at receiver $k$, for $j,k\in\{i,\iota\}$. The term $h^{n_{\mathrm{r}_i}}_{\mathrm{t}_i,\mathrm{r}_i}$ defines the corresponding normalized Rayleigh fading channel from transmitter $\mathrm{t}_i$ to receiver $\mathrm{r}_i$, i.e., ${\mathbb{E}\left\{|h^{n_{\mathrm{r}_i}}_{\mathrm{t}_i,\mathrm{r}_i}|^2\right\}=1}$, where $\mathbb{E}\left\{\cdot\right\}$ represents the expectation operator.

Specifically, we consider the scenario where each receiver $\mathrm{r}_i$ within the FAMA-IC is physically closer to the interfering transmitter $\mathrm{t}_{\iota}$ than to its own transmitter $\mathrm{t}_{i}$. This can be the case, for instance, where users are located in the expanded region of a pico-cell in the presence of a macro-cell \cite{Zhou2015}. As a result, the signal received from the interfering transmitter $\mathrm{t}_{\iota}$ surpasses that from its respective transmitter $\mathrm{t}_{i}$ in signal strength. Under such strong interference conditions, employing SND yields superior results compared to treating interference as noise.\footnote{Conventional sequential decoding schemes such as SIC typically aim to find a single best solution for decoding the desired signal at the receiver, especially when the users have different channel strengths. Conversely, SND is mainly employed when there may not be a unique solution to estimate the transmitted signal vector due to noise, interference, or channel conditions. In other words, SND is particularly useful when the interference cannot be easily cancelled or users have dissimilar yet comparable path losses.} By decoding multiple signals simultaneously using approaches such as multiuser detection or joint detection, SND can achieve higher spectral efficiency and offer a larger capacity region \cite{el2011network}. Consequently, each receiver can effectively retrieve the message of the interfering transmitter without requiring an additional constraint on its rate. With all the above definitions, the capacity region for a general two-user IC is expressed as \cite[Theorem 6.2]{el2011network}
\begin{subequations}\label{eq-capacity}
\begin{align}
R_1&<\log_2\left(1+{\rm SNR_1}\right),\\
R_2&<\log_2\left(1+{\rm SNR_2}\right),\\ \notag
R_1+R_2&<\min\left\{\log_2\left(1+{\rm SNR_1}+{\rm INR_1}\right)\right.,\\
&~\quad\quad\quad\left.\log_2\left(1+{\rm SNR_2}+{\rm INR_2}\right)\right\},
\end{align}
\end{subequations}
in which ${\rm SNR}_i$ and ${\rm INR}_i$ represent the SNR for the desired and interfering messages at user $i$, respectively. From these, the instantaneous capacity region for the considered strong interference scenario in the FAMA-IC can be defined as
\begin{subequations}\label{eq-capacity}
\begin{align}
R_1&<\log_2\left(1+\underset{n_{\mathrm{r}_1}\in\left\{1,\dots,N_{\mathrm{r}_1}\right\}}{\max}\left\{\gamma^{n_\mathrm{r_1}}_{\mathrm{r_1}}\right\}\right),\\
R_2&<\log_2\left(1+\underset{n_{\mathrm{r}_2}\in\left\{1,\dots,N_{\mathrm{r}_2}\right\}}{\max}\left\{\gamma^{n_\mathrm{r_2}}_{\mathrm{r_2}}\right\}\right),\\ \notag
R_1+R_2&<\min\left\{\log_2\left(1+\underset{\hat{n}_{\mathrm{r}_1}\in\left\{1,\dots,N_{\mathrm{r}_1}\right\}}{\max}\left\{\kappa^{\hat{n}_{\mathrm{r}_1}}_{\mathrm{r}_1}\right\}\right)\right.,\\
&~~\quad\quad\quad\left.\log_2\left(1+\underset{\hat{n}_{\mathrm{r}_2}\in\left\{1,\dots,N_{\mathrm{r}_2}\right\}}{\max}\left\{\kappa^{\hat{n}_{\mathrm{r}_2}}_{\mathrm{r}_2}\right\}\right)\right\},
\end{align}
\end{subequations}
 where $\kappa^{\hat{n}_{\mathrm{r}_i}}_{\mathrm{r}_i}=\gamma^{\hat{n}_{\mathrm{r}_i}}_{\mathrm{r}_i}+\zeta^{\hat{n}_{\mathrm{r}_i}}_{\mathrm{r}_i}$, and
\begin{equation}
\left\{\begin{aligned}
\gamma^{\check{n}_{\mathrm{r}_i}}_{\mathrm{r}_i} &= \frac{P_i L_{i,i}\left|h^{\check{n}_{\mathrm{r}_i}}_{\mathrm{t}_i\mathrm{r}_i}\right|^2}{\sigma^2_i}=\overline{\gamma}_{\mathrm{r}_i}\left|h^{\check{n}_{\mathrm{r}_i}}_{\mathrm{t}_i\mathrm{r}_i}\right|^2,\\
\zeta^{\check{n}_{\mathrm{r}_i}}_{\mathrm{r}_i} &= \frac{P_{\iota}L_{\iota,i}\left|h^{\check{n}_{\mathrm{r}_i}}_{\mathrm{t}_{\iota}\mathrm{r}_i}\right|^2}{\sigma^2_i}=\overline{\zeta}_{\mathrm{r}_i}\left|h^{\check{n}_{\mathrm{r}_i}}_{\mathrm{t}_{\iota}\mathrm{r}_i}\right|^2,
\end{aligned}\right.
\end{equation}
provided that $P_{\iota}L_{\iota,i}\vert h_{{\mathrm{t}_{
\iota}},\mathrm{r}_{i}}\vert^2>P_{i}L_{i,i}\vert h_{{\mathrm{t}_{i}},\mathrm{r}_i}\vert^2$ (strong interference condition). The terms $\overline{\gamma}_{\mathrm{r}_i}$ and $\overline{\zeta}_{\mathrm{r}_i}$ denote the average SNR and the average INR at receiver $\mathrm{r}_i$, respectively.  Note that $\check{n}\in\left\{n,\hat{n}\right\}$, $\mathcal{F}\left(\hat{n}_{\mathrm{r}_i}\right)=\left(\hat{n}_{{\mathrm{r}_i},1},\hat{n}_{{\mathrm{r}_i},2}\right)$ and $\mathcal{F}^{-1}\left(\hat{n}_{{\mathrm{r}_i},1},\hat{n}_{{\mathrm{r}_i},2}\right)=\hat{n}_{\mathrm{r}_i}$, where  $\hat{n}_{\mathrm{r}_i}\in\left\{1,\dots,N_{\mathrm{r}_i}\right\}$ and $\hat{n}_{{\mathrm{r}_i},l}\in\left\{1,\dots,N_{{\mathrm{r}_i},l}\right\}$. That is to say, we consider that only the port which maximizes the received SNR at the receiver is activated.

\section{Statistical Characterization}\label{sec-sta}
To mathematically characterize the capacity region of the FAMA-IC in \eqref{eq-capacity} under Rayleigh fading, we find it necessary to determine the distribution of the maximum amongst the $N_{\mathrm{r}_i}$ correlated SNRs, INRs, and the sum of SNRs and INRs. In this regard, we first note that all channels experience Rayleigh fading, so the corresponding channel gains are exponentially distributed, i.e., $\left|h^{n_{\mathrm{r}_i}}_{\mathrm{t}_i\mathrm{r}_i}\right|^2\sim\exp\left(1\right)$. Hence, the PDF and CDF for $\gamma_{\mathrm{r}_i}^{\check{n}_{\mathrm{r}_i}}$ and $\zeta_{\mathrm{r}_i}^{\check{n}_{\mathrm{r}_i}}$ are, respectively, expressed as
\begin{align}\label{eq-pdf-delta}
f_{\delta^{\check{n}_{\mathrm{r}_i}}_{\mathrm{r}_i}}\left(\delta_{\mathrm{r}_i}^{\check{n}_{\mathrm{r}_i}}\right)=\frac{1}{\overline{\delta}_{\mathrm{r}_i}}\exp\left(-\frac{\delta_{\mathrm{r}_i}^{\check{n}_{\mathrm{r}_i}}}{\overline{\delta}_{\mathrm{r}_i}}\right)
\end{align}
and
\begin{align}\label{eq-cdf-delta}
F_{\delta^{\check{n}_{\mathrm{r}_i}}_{\mathrm{r}_i}}\left(\delta_{\mathrm{r}_i}^{\check{n}_{\mathrm{r}_i}}\right)=1-\exp\left(-\frac{\delta_{\mathrm{r}_i}^{\check{n}_{\mathrm{r}_i}}}{\overline{\delta}_{\mathrm{r}_i}}\right),
\end{align}
in which $\delta\in\left\{\gamma,\zeta\right\}$.

Following a similar approach as in \cite{ghadi2023gaussian}, the joint multivariate distribution for the maximum of $d$ correlated arbitrary RVs can be accurately generated by using Sklar's theorem. More specifically, the joint CDF of $d$ RVs $S_1,\dots, S_d$ with the corresponding marginal CDF $F_{S_l}\left(s_l\right)$, $l\in\left\{1,\dots,d\right\}$, in the extended real line domain $\mathbb{R}$, is given by
\begin{align}\label{eq-sklar}
F_{S_1,\dots,S_d}\left(s_1,\dots,s_d\right)=C\left(F_{S_1}\left(s_1\right),\dots,F_{S_d}\left(s_d\right);\vartheta\right),
\end{align}
where $C:\left[0,1\right]^d\rightarrow\left[0,1\right]$ is the $d$-dimensional copula, which is defined as a joint CDF of $d$ RVs on the unit cube $\left[0,1\right]^d$ with uniform marginal distributions. Besides, $\vartheta$ is the copula dependence parameter that measures the linear and/or non-linear correlation between the underling RVs. In this regard, Gaussian copula is one of the practical models to describe the inherent spatial correlation between fluid antenna ports due to its appealing properties \cite{ghadi2023gaussian}. Thus, by defining the maximum of  $\gamma_{\mathrm{r}_i}^{n_{\mathrm{r}_i}}$ and $\zeta_{\mathrm{r}_i}^{n_{\mathrm{r}_i}}$ as 
\begin{align}
\delta^\mathrm{fama}_{\mathrm{r}_i}=\max\left\{\delta_{\mathrm{r}_i}^{1},\dots,\delta_{\mathrm{r}_i}^{N_{\mathrm{r}_i}}\right\}, ~\delta\in\left\{\gamma,\zeta\right\},
\end{align}
the CDF of $\delta^\mathrm{fama}_{\mathrm{r}_i}$ can be found as \cite{ghadi2023gaussian}
\begin{align}\nonumber
&F_{\delta^\mathrm{fama}_{\mathrm{r}_i}}\left(r\right)\\
&=\Phi_{\mathbf{R}_{\mathrm{r}_i}}\left(\varphi^{-1}\left(F_{\delta_{\mathrm{r}_i}^\mathrm{1}}\left(r\right)\right),\dots,\varphi^{-1}\left(F_{\delta_{\mathrm{r}_i}^{N_{\mathrm{r}_i}}}\left(r\right)\right);\varpi^{\mathrm{r}_i}\right),\label{eq-cdf}
\end{align}
in which $F_{\delta_{\mathrm{r}_i}^{n_{\mathrm{r}_i}}}\left(r\right)$ is defined in \eqref{eq-cdf-delta}, $\Phi_\mathbf{R}(\cdot)$ is the joint CDF of the multivariate normal distribution with zero mean vector and correlation matrix $\mathbf{R}_{\mathrm{r}_i}$, 
$\varphi^{-1}\left(x\right)=\sqrt{2}\mathrm{erf}^{-1}\left(2x-1\right)$ is the quantile function of the standard normal distribution, where $\mathrm{erf}^{-1}\left(\cdot\right)$ is the inverse of the error function $\mathrm{erf}\left(z\right)=\frac{2}{\sqrt{\pi}}\int_0^z\mathrm{e}^{-t^2}dt$. In addition, $\varpi^{\mathrm{r}_i}$ defines the dependence parameter of the Gaussian copula. It is noted that by exploiting the Cholesky decomposition, we have $\varpi^{\mathrm{r}_i}_{n_{\mathrm{r}_i},\tilde{n}_{\mathrm{r}_i}}\approx\varrho^{\mathrm{r}_i}_{n_{\mathrm{r}_i},\tilde{n}_{\mathrm{r}_i}}$ \cite{ghadi2023gaussian}. 

By applying the chain rule to \eqref{eq-cdf}, the PDF of $\delta^\mathrm{fama}_{\mathrm{r}_i}$ is given by
\begin{multline}\label{eq-pdf-gaussian}
f_{\delta^\mathrm{fama}_{\mathrm{r}_i}}\left(r\right)\\
=\prod_{n_{\mathrm{r}_i}=1}^{K_{\mathrm{r}_i}}f_{\delta_{\mathrm{r}_i}}^{n_{\mathrm{r}_i}}\left(r\right)\frac{\exp\left(-\frac{1}{2}\left(\boldsymbol{\vec{\varphi}}^{-1}_{\delta_{\mathrm{r}_i}}\right)^T\left(\mathbf{R}_{\mathrm{r}_i}^{-1}-\mathbf{I}\right)\boldsymbol{\vec{\varphi}}^{-1}_{\delta_{\mathrm{r}_i}}\right)}{\sqrt{{\rm det}\left(\mathbf{R}_{\mathrm{r}_i}\right)}},
\end{multline}
where $f_{\delta_{\mathrm{r}_i}^{n_{\mathrm{r}_i}}}\left(r\right)$ is given in \eqref{eq-pdf-delta}, $\mathrm{det}\left(\mathbf{R}_{\mathrm{r}_i}\right)$ is the determinant of the correlation matrix $\mathbf{R}_{\mathrm{r}_i}$, $\mathbf{I}$ is an identity matrix, and 
\begin{align}
\boldsymbol{\vec{\varphi}}^{-1}_{\delta_{\mathrm{r}_i}}=\left[\varphi^{-1}\left(F_{\delta_{\mathrm{r}_i}^\mathrm{1}}\left(r\right)\right),\dots,\varphi^{-1}\left(F_{\delta_{\mathrm{r}_i}^{N_{\mathrm{r}_i}}}\left(r\right)\right)\right]^T.
\end{align}

Now, we need to determine the distribution corresponding to the maximum of the sum of the SNR and INR, i.e., $\kappa^{\hat{n}_{\mathrm{r}_i}}_{\mathrm{r}_i}=\gamma^{\hat{n}_{\mathrm{r}_i}}_{\mathrm{r}_i}+\zeta^{\hat{n}_{\mathrm{r}_i}}_{\mathrm{r}_i}$, and we define 
\begin{align}
\kappa_{\mathrm{r}_i}^\mathrm{fama}=\max\left\{\kappa^1_{\mathrm{r}_i},\dots,\kappa^{N_{\mathrm{r}_i}}_{\mathrm{r}_i}\right\}. 
\end{align}
Given that $\kappa^{\hat{n}_{\mathrm{r}_i}}_{\mathrm{r}_i}$ includes the sum of two independent exponential RVs with different scale parameters $\overline{\gamma}_{\mathrm{r}_i}$ and $\overline{\zeta}_{\mathrm{r}_i}$, it has the hypoexponential distribution with the following PDF and CDF \cite{ghadi2022copula}:
\begin{align}\label{eq-pdf-kappa}
f_{\kappa_{\mathrm{r}_i}^{\hat{n}_{\mathrm{r}_i}}}\left(\kappa_{\mathrm{r}_i}^{\hat{n}_{\mathrm{r}_i}}\right)=\frac{1}{\Delta_{\mathrm{r}_i}}\left(\mathrm{e}^{-\frac{\kappa_{\mathrm{r}_i}^{\hat{n}_{\mathrm{r}_i}}}{\overline{\gamma}_{\mathrm{r}_i}}}-\mathrm{e}^{-\frac{\kappa_{\mathrm{r}_i}^{\hat{n}_{\mathrm{r}_i}}}{\overline{\zeta}_{\mathrm{r}_i}}}\right)
\end{align}
and
\begin{align}\label{eq-cdf-kappa}
F_{\kappa_{\mathrm{r}_i}^{\hat{n}_{\mathrm{r}_i}}}\left(\kappa_{\mathrm{r}_i}^{\hat{n}_{\mathrm{r}_i}}\right)=1-\frac{1}{\Delta_{\mathrm{r}_i}}\left(\overline{\gamma}_{\mathrm{r}_i}\mathrm{e}^{-\frac{\kappa_{\mathrm{r}_i}^{\hat{n}_{\mathrm{r}_i}}}{\overline{\gamma}_{\mathrm{r}_i}}}-\overline{\zeta}_{\mathrm{r}_i}\mathrm{e}^{-\frac{\kappa_{\mathrm{r}_i}^{\hat{n}_{\mathrm{r}_i}}}{\overline{\zeta}_{\mathrm{r}_i}}}\right),
\end{align}
respectively, where $\Delta_{\mathrm{r}_i}=\overline{\gamma}_{\mathrm{r}_i}-\overline{\zeta}_{\mathrm{r}_i}$. 

As such, using Sklar's theorem in \eqref{eq-sklar} and the Gaussian copula definition, the CDF and PDF of $\kappa_{\mathrm{r}_i}^\mathrm{fama}$ are found as
\begin{align}
&F_{\kappa^\mathrm{fama}_{\mathrm{r}_i}}\left(r\right)\notag\\
&=\Phi_{\mathbf{R}_{\mathrm{r}_i}}\left(\varphi^{-1}\left(F_{\kappa_{\mathrm{r}_i}^\mathrm{1}}\left(r\right)\right),\dots,\varphi^{-1}\left(F_{\kappa_{\mathrm{r}_i}^{N_{\mathrm{r}_i}}}\left(r\right)\right);\varpi^{\mathrm{r}_i}\right),\label{eq-cdf-kappa1}
\end{align}
and
\begin{multline}\label{eq-pdf-gaussian-kappa}
f_{\kappa^\mathrm{fama}_{\mathrm{r}_i}}\left(r\right)\\
=\prod_{\hat{n}_{\mathrm{r}_i}=1}^{N_{\mathrm{r}_i}}f_{\kappa_{\mathrm{r}_i}}^{\hat{n}_{\mathrm{r}_i}}\left(r\right)\frac{\exp\left(-\frac{1}{2}\left(\boldsymbol{\vec{\varphi}}^{-1}_{\kappa_{\mathrm{r}_i}}\right)^T\left(\mathbf{R}_{\mathrm{r}_i}^{-1}-\mathbf{I}\right)\boldsymbol{\vec{\varphi}}^{-1}_{\kappa_{\mathrm{r}_i}}\right)}{\sqrt{{\rm det}\left(\mathbf{R}_{\mathrm{r}_i}\right)}},
\end{multline}
in which $F_{\kappa_{\mathrm{r}_i}^{\hat{n}_{\mathrm{r}_i}}}\left(r\right)$ and $f_{\kappa_{\mathrm{r}_i}^{\hat{n}_{\mathrm{r}_i}}}\left(r\right)$ are defined in \eqref{eq-cdf-kappa} and \eqref{eq-pdf-kappa}, respectively. Moreover, $\boldsymbol{\vec{\varphi}}^{-1}_{\kappa_{\mathrm{r}_i}}$ is denoted as
\begin{align}
\boldsymbol{\vec{\varphi}}^{-1}_{\kappa_{\mathrm{r}_i}}\triangleq\left[\varphi^{-1}\left(F_{\kappa_{\mathrm{r}_i}^\mathrm{1}}\left(r\right)\right),\dots,\varphi^{-1}\left(F_{\kappa_{\mathrm{r}_i}^{N_{\mathrm{r}_i}}}\left(r\right)\right)\right]^T.
\end{align}
Note that the copula approach allows for a tractable characterization for the joint distribution of $\kappa_{\mathrm{r}_i}^\mathrm{fama}$, from their underlying marginal distributions. Having characterized the key statistics of the equivalent channels under consideration, these will be used to analyze a number of performance metrics.

\section{Performance Analysis}\label{sec-per}
\subsection{OP}
OP is an important performance metric in evaluating wireless communication systems, and is defined as the probability that the capacity falls below a given rate threshold $R_\mathrm{th}$, i.e., $P_\mathrm{out}={\rm Pr}\left(C\leq R_\mathrm{th}\right)$, where $C=\log_2\left(1+\delta\right)$. Hence, for the considered FAMA-IC, an outage occurs if at least one of the following events happens:
 \begin{subequations}
 \begin{align}
\mathcal{E}_1:&\log_2\left(1+\gamma_{\mathrm{r}_1}^{\mathrm{fama}}\right)\leq R_1^\mathrm{th},\\
\mathcal{E}_2:&\log_2\left(1+\gamma_{\mathrm{r}_2}^{\mathrm{fama}}\right)\leq R_2^\mathrm{th},\\
\mathcal{E}_3:&\min\left\{ \log_2\left(1+\kappa_{\mathrm{r}_1}^{\mathrm{fama}}\right),\log_2\left(1+\kappa_{\mathrm{r}_2}^{\mathrm{fama}}\right)\right\}\leq R_\mathrm{sum}^\mathrm{th},
\end{align}
\end{subequations}
where $R_i^\mathrm{th}$ and  $R_\mathrm{sum}^\mathrm{th}=R_1^\mathrm{th}+R_2^\mathrm{th}$ are the rate threshold and the sum-rate threshold, respectively. Therefore, the OP for the considered FAMA-IC is defined as 
\begin{align}\label{eq-op-def}
P_\mathrm{out}=\Pr\left(\mathcal{E}_1\cup \mathcal{E}_2\cup \mathcal{E}_3\right).
\end{align}

\begin{theorem}
The OP for the FAMA-IC system under the strong interference scenario is given by
\begin{multline}
P_\mathrm{out}=1-
\overline{F}_{\kappa_\mathrm{r_1}^\mathrm{fama}}\left(\tilde{R}^\mathrm{th}_\mathrm{sum}\right)\times\\
\overline{F}_{\kappa_\mathrm{r_2}^\mathrm{fama}}\left(\tilde{R}^\mathrm{th}_\mathrm{sum}\right)\overline{F}_{\gamma_{\mathrm{r}_1}^\mathrm{fama},\gamma^\mathrm{fama}_{\mathrm{r}_2}}\left(\tilde{R}^\mathrm{th}_1,\tilde{R}^\mathrm{th}_2\right),
\end{multline}
in which $\overline{F}_{\kappa_{\mathrm{r}_i}^\mathrm{fama}}\left(\tilde{R}^\mathrm{th}_\mathrm{sum}\right)$ represents the complementary CDF (CCDF) of $\kappa_{\mathrm{r}_i}^\mathrm{fama}$,  $\overline{F}_{\gamma_{\mathrm{r}_1}^\mathrm{fama},\gamma^\mathrm{fama}_{\mathrm{r}_2}}\left(\tilde{R}^\mathrm{th}_1,\tilde{R}^\mathrm{th}_2\right)$ is the joint CCDF of $\gamma_{\mathrm{r}_1}^\mathrm{fama}$ and $\gamma_{\mathrm{r}_2}^\mathrm{fama}$, $\tilde{R}^\mathrm{th}_i=2^{R^\mathrm{th}_i}-1$, and $\tilde{R}_\mathrm{sum}^\mathrm{th}=2^{R^\mathrm{th}_\mathrm{sum}}-1$.
\end{theorem}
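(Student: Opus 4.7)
The plan is to express $P_\mathrm{out}=\Pr(\mathcal{E}_1\cup\mathcal{E}_2\cup\mathcal{E}_3)$ via complementation as $1-\Pr(\overline{\mathcal{E}_1}\cap\overline{\mathcal{E}_2}\cap\overline{\mathcal{E}_3})$, translate each complementary event into a survival event on the relevant maximum, and then factor that joint survival event into the three CCDFs that appear in the claim.

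First, I would rewrite every complementary event as a level-crossing inequality. Monotonicity of $\log_2(1+\cdot)$ turns $\overline{\mathcal{E}_i}$, $i\in\{1,2\}$, into $\{\gamma_{\mathrm{r}_i}^\mathrm{fama}>\tilde R_i^\mathrm{th}\}$ with $\tilde R_i^\mathrm{th}=2^{R_i^\mathrm{th}}-1$. For $\overline{\mathcal{E}_3}$, the identity $\min\{a,b\}>c\Leftrightarrow(a>c)\wedge(b>c)$ unfolds the third event into the conjunction $\{\kappa_{\mathrm{r}_1}^\mathrm{fama}>\tilde R_\mathrm{sum}^\mathrm{th}\}\cap\{\kappa_{\mathrm{r}_2}^\mathrm{fama}>\tilde R_\mathrm{sum}^\mathrm{th}\}$, with $\tilde R_\mathrm{sum}^\mathrm{th}=2^{R_\mathrm{sum}^\mathrm{th}}-1$. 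The joint survival event is then a four-fold inequality on $(\gamma_{\mathrm{r}_1}^\mathrm{fama},\kappa_{\mathrm{r}_1}^\mathrm{fama},\gamma_{\mathrm{r}_2}^\mathrm{fama},\kappa_{\mathrm{r}_2}^\mathrm{fama})$.

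Second, I would factor that four-fold joint probability. The random variables at receiver~$\mathrm{r}_1$ are driven by the Rayleigh fades $\{h^{n_{\mathrm{r}_1}}_{\mathrm{t}_j,\mathrm{r}_1}\}$ and those at receiver~$\mathrm{r}_2$ by $\{h^{n_{\mathrm{r}_2}}_{\mathrm{t}_j,\mathrm{r}_2}\}$, which are mutually independent because the two sets of propagation paths are physically disjoint. This separates the $\mathrm{r}_1$-events from the $\mathrm{r}_2$-events. The remaining decoupling of $\gamma_{\mathrm{r}_i}^\mathrm{fama}$ from $\kappa_{\mathrm{r}_i}^\mathrm{fama}$ at the same receiver is not exact, since $\kappa_{\mathrm{r}_i}^{\hat n_{\mathrm{r}_i}}=\gamma_{\mathrm{r}_i}^{\hat n_{\mathrm{r}_i}}+\zeta_{\mathrm{r}_i}^{\hat n_{\mathrm{r}_i}}$ shares the desired-signal channel with $\gamma_{\mathrm{r}_i}^\mathrm{fama}$. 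The way I would justify treating them as independent is to invoke the strong-interference hypothesis $P_\iota L_{\iota,i}|h_{\mathrm{t}_\iota,\mathrm{r}_i}|^2>P_i L_{i,i}|h_{\mathrm{t}_i,\mathrm{r}_i}|^2$ from Section~\ref{sec-sys}: $\kappa_{\mathrm{r}_i}^\mathrm{fama}$ is then dominated by the interferer channel, whose fluctuations are statistically orthogonal to the signal channel that governs $\gamma_{\mathrm{r}_i}^\mathrm{fama}$. Finally, I would substitute the distributional building blocks already derived in Section~\ref{sec-sta}: the two marginal CCDFs $\overline{F}_{\kappa_{\mathrm{r}_i}^\mathrm{fama}}(\tilde R_\mathrm{sum}^\mathrm{th})$ come from the Gaussian-copula CDF~\eqref{eq-cdf-kappa1} with hypoexponential marginals~\eqref{eq-cdf-kappa}, and the bivariate $\overline{F}_{\gamma_{\mathrm{r}_1}^\mathrm{fama},\gamma_{\mathrm{r}_2}^\mathrm{fama}}(\tilde R_1^\mathrm{th},\tilde R_2^\mathrm{th})$ is assembled from the marginal expression~\eqref{eq-cdf} at each receiver via cross-receiver independence. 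Collecting the three factors and subtracting from $1$ delivers the claimed formula.

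I expect the main obstacle to lie precisely in the $\gamma$--$\kappa$ decoupling step at a single receiver: this is where the equality in the theorem carries an implicit approximation driven by the strong-interference regime, and any fully rigorous version would either require a joint copula model for $(\gamma_{\mathrm{r}_i}^\mathrm{fama},\kappa_{\mathrm{r}_i}^\mathrm{fama})$ or a precise asymptotic statement in place of the $=$ sign. Everything else in the argument (De~Morgan, the $\min$-identity, cross-receiver independence, and the substitution of the copula-based CCDFs) is routine once this modelling choice is made explicit.
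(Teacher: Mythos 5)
Your proposal is correct and arrives at the same formula, but it organizes the argument differently from the paper. The paper expands $P_\mathrm{out}=\Pr\left(\mathcal{E}_1\cup\mathcal{E}_2\cup\mathcal{E}_3\right)$ by inclusion--exclusion into seven terms, evaluates $\Pr\left(\mathcal{E}_1\right)=F_{\gamma_{\mathrm{r}_1}^{\mathrm{fama}}}\left(\tilde{R}_1^{\mathrm{th}}\right)$, $\Pr\left(\mathcal{E}_2\right)=F_{\gamma_{\mathrm{r}_2}^{\mathrm{fama}}}\left(\tilde{R}_2^{\mathrm{th}}\right)$ and $\Pr\left(\mathcal{E}_3\right)=1-\overline{F}_{\kappa_{\mathrm{r}_1}^{\mathrm{fama}}}\left(\tilde{R}_\mathrm{sum}^{\mathrm{th}}\right)\overline{F}_{\kappa_{\mathrm{r}_2}^{\mathrm{fama}}}\left(\tilde{R}_\mathrm{sum}^{\mathrm{th}}\right)$, asserts that the three events are mutually independent, and recombines ``after some mathematical simplification.'' You instead complement once, $P_\mathrm{out}=1-\Pr\left(\overline{\mathcal{E}_1}\cap\overline{\mathcal{E}_2}\cap\overline{\mathcal{E}_3}\right)$, and factor the joint survival probability directly via the $\min$-identity and cross-receiver independence. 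Under the same independence assumptions the two computations collapse to the identical product, so your version is simply the cleaner bookkeeping; it buys you a one-line derivation in place of the seven-term expansion, at no loss of generality.

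The genuinely valuable part of your write-up is the caveat you flag at the end, and you are right about it: since $\kappa_{\mathrm{r}_i}^{\hat{n}_{\mathrm{r}_i}}=\gamma_{\mathrm{r}_i}^{\hat{n}_{\mathrm{r}_i}}+\zeta_{\mathrm{r}_i}^{\hat{n}_{\mathrm{r}_i}}$ shares the desired-signal channel with $\gamma_{\mathrm{r}_i}^{\mathrm{fama}}$, the events $\mathcal{E}_i$ and $\mathcal{E}_3$ are \emph{not} independent at a given receiver, and the paper's blanket claim that ``all events $\mathcal{E}_j$ are independent of each other'' is precisely the step that hides this. The paper offers no justification; your dominance argument under the strong-interference condition $P_{\iota}L_{\iota,i}\vert h_{\mathrm{t}_{\iota},\mathrm{r}_i}\vert^2>P_iL_{i,i}\vert h_{\mathrm{t}_i,\mathrm{r}_i}\vert^2$ at least gives an explicit (if heuristic) reason why the product form should be accurate, and your observation that a rigorous statement would need a joint copula model for $\left(\gamma_{\mathrm{r}_i}^{\mathrm{fama}},\kappa_{\mathrm{r}_i}^{\mathrm{fama}}\right)$ or an asymptotic qualifier on the equality is exactly the right diagnosis. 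One minor addition: the cross-receiver independence you invoke also means the ``joint CCDF'' $\overline{F}_{\gamma_{\mathrm{r}_1}^\mathrm{fama},\gamma^\mathrm{fama}_{\mathrm{r}_2}}$ in the theorem is just the product of the two marginal CCDFs, which is how the paper itself writes it in the asymptotic expression \eqref{eq-cdf2-asym}. In substance your proof matches the paper's; where it deviates, it is more careful, not less.
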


\begin{proof}
The OP definition in \eqref{eq-op-def} can be extended as
\begin{multline}
P_\mathrm{out}=\Pr\left(\mathcal{E}_1\right)+\Pr\left(\mathcal{E}_2\right)+\Pr\left(\mathcal{E}_3\right)\\
-\Pr\left(\mathcal{E}_1\cap\mathcal{E}_2\right)-\Pr\left(\mathcal{E}_1\cap\mathcal{E}_3\right)-\Pr\left(\mathcal{E}_2\cap\mathcal{E}_3\right)\\
+\Pr\left(\mathcal{E}_1\cap\mathcal{E}_2\cap\mathcal{E}_3\right),
\end{multline}
where $\Pr\left(\mathcal{E}_1\right)$ can be determined as
\begin{subequations}
\begin{align}
\Pr\left(\mathcal{E}_1\right) &= \Pr\left(\log_2\left(1+\gamma_{\mathrm{r}_1}^{\mathrm{fama}}\right)\leq R_1^\mathrm{th}\right)\\
&=\Pr\left(\gamma_{\mathrm{r}_1}^{\mathrm{fama}}\leq\tilde{R}^\mathrm{th}_1\right)\\
& = F_{\gamma^\mathrm{fama}_{\mathrm{r}_1}}\left(\tilde{R}^\mathrm{th}_1\right),
\end{align}
\end{subequations}
where $\tilde{R}^\mathrm{th}_1=2^{R^\mathrm{th}_1}-1$. Similarly, $\Pr\left(\mathcal{E}_2\right)= F_{\gamma^\mathrm{fama}_{\mathrm{r}_2}}\left(\tilde{R}^\mathrm{th}_2\right)$. As for the event $\mathcal{E}_3$, we have
\begin{subequations}
\begin{align}\notag
&\Pr\left(\mathcal{E}_3\right)=\\
&\Pr\left(\min\left\{\log_2\left(1+\kappa_{\mathrm{r}_1}^{\mathrm{fama}}\right),\log_2\left(1+\kappa_{\mathrm{r}_2}^{\mathrm{fama}}\right)\right\}\hspace{-1mm}\leq R_\mathrm{sum}^\mathrm{th}\right)\\\notag
&=1-\Pr\left(\log_2\left(1+\kappa_{\mathrm{r}_1}^{\mathrm{fama}}\right)>R_\mathrm{sum}^\mathrm{th}\right)\\ \label{eq-proof-b}
&\quad\quad\quad\times \Pr\left(\log_2\left(1+\kappa_{\mathrm{r}_1}^{\mathrm{fama}}\right)>R_\mathrm{sum}^\mathrm{th}\right)\\ 
&=1-\Pr\left(\kappa_{\mathrm{r}_1}^{\mathrm{fama}}>\tilde{R}_\mathrm{sum}^\mathrm{th}\right)\Pr\left(\kappa_{\mathrm{r}_2}^{\mathrm{fama}}>\tilde{R}_\mathrm{sum}^\mathrm{th}\right)\\
&=1-\overline{F}_{\kappa_\mathrm{r_1}^\mathrm{fama}}\left(\tilde{R}^\mathrm{th}_\mathrm{sum}\right)\overline{F}_{\kappa_\mathrm{r_2}^\mathrm{fama}}\left(\tilde{R}^\mathrm{th}_\mathrm{sum}\right),
\end{align}
\end{subequations}
where \eqref{eq-proof-b} is obtained from the independence of the events, $\tilde{R}_\mathrm{sum}^\mathrm{th}=2^{R^\mathrm{th}_\mathrm{sum}}-1$, and $\overline{F}_{\kappa_{\mathrm{r}_i}^\mathrm{fama}}\left(r\right)=1-F_{\kappa_{\mathrm{r}_i}^\mathrm{fama}}\left(r\right)$ is the CCDF of $\kappa_{\mathrm{r}_i}^\mathrm{fama}$. Since all events $\mathcal{E}_j$ for $j\in\left\{1,2,3\right\}$ are independent of each other, the other probabilities can be obtained according to $\Pr\left(\mathcal{E}_j\right)$. Therefore, by considering the CDF of $\gamma^\mathrm{fama}_{\mathrm{r}_i}$, $\zeta^\mathrm{fama}_{\mathrm{r}_i}$, and $\kappa^\mathrm{fama}_{\mathrm{r}_i}$, and after some mathematical simplification, the proof is accomplished.  
\end{proof}

From the derived closed-form expression for OP, we now aim to understand the asymptotic behavior of the OP at high SNR, i.e., $\gamma_{{\mathrm{r}}_i}\rightarrow\infty$ while meeting the strong interference condition. As we will later see, this asymptotic expression provides more insights into the system's reliability.

\begin{corollary}\label{col-out}
The asymptotic OP in the high SNR regime for the FAMA-IC system under the strong interference scenario is given by
\begin{multline}
P_\mathrm{out}^\infty=1-
\overline{F}_{\kappa_\mathrm{r_1}^\mathrm{fama}}^\infty\left(\tilde{R}^\mathrm{th}_\mathrm{sum}\right)\\
\times\overline{F}_{\kappa_\mathrm{r_2}^\mathrm{fama}}^\infty\left(\tilde{R}^\mathrm{th}_\mathrm{sum}\right)\overline{F}_{\gamma_{\mathrm{r}_1}^\mathrm{fama},\gamma^\mathrm{fama}_{\mathrm{r}_2}}^\infty\left(\tilde{R}^\mathrm{th}_1,\tilde{R}^\mathrm{th}_2\right),
\end{multline}
where $\overline{F}_{\kappa_{\mathrm{r}_i}^\mathrm{fama}}^\infty\left(\tilde{R}^\mathrm{th}_\mathrm{sum}\right)$ and $\overline{F}_{\gamma_{\mathrm{r}_1}^\mathrm{fama},\gamma^\mathrm{fama}_{\mathrm{r}_2}}^\infty\left(\tilde{R}^\mathrm{th}_1,\tilde{R}^\mathrm{th}_2\right)$ represent the CCDF of $\kappa_{\mathrm{r}_i}^\mathrm{fama}$ and the joint CCDF of $\gamma_{\mathrm{r}_1}^\mathrm{fama}$ and $\gamma_{\mathrm{r}_2}^\mathrm{fama}$ in the high SNR regime, provided in \eqref{eq-cdf1-asym} and \eqref{eq-cdf2-asym}, respectively. 
\begin{figure*}[!t]
\begin{align}
	\overline{F}_{\kappa_{\mathrm{r}_i}^\mathrm{fama}}^\infty\left(\tilde{R}^\mathrm{th}_\mathrm{sum}\right)=1-
	\Phi_{\mathbf{R}_i}\left(\varphi^{-1}\left(\frac{1}{\Delta_{\mathrm{r}_i}}\left(\tilde{R}^\mathrm{th}_\mathrm{sum}+\overline{\zeta}_{\mathrm{r}_i}\mathrm{e}^{-\frac{\tilde{R}^\mathrm{th}_\mathrm{sum}}{\overline{\zeta}_{\mathrm{r}_i}}}\right)\right),\dots,\varphi^{-1}\left(\frac{1}{\Delta_{\mathrm{r}_i}}\left(\tilde{R}^\mathrm{th}_\mathrm{sum}+\overline{\zeta}_{\mathrm{r}_i}\mathrm{e}^{-\frac{\tilde{R}^\mathrm{th}_\mathrm{sum}}{\overline{\zeta}_{\mathrm{r}_i}}}\right)\right);\varpi^{\mathrm{r}_i}\right)\label{eq-cdf1-asym}
\end{align}
\hrulefill
\begin{align}\notag
&\overline{F}_{\gamma_{\mathrm{r}_1}^\mathrm{fama},\gamma^\mathrm{fama}_{\mathrm{r}_2}}^\infty\left(\tilde{R}^\mathrm{th}_1,\tilde{R}^\mathrm{th}_2\right)\\
&\quad=\Bigg[1- \Phi_{\mathbf{R}_{\mathrm{r}_1}}\left(\varphi^{-1}\left(\frac{\tilde{R}^\mathrm{th}_1}{\overline{\gamma}_{\mathrm{r}_1}}\right),\dots,\varphi^{-1}\left(\frac{\tilde{R}^\mathrm{th}_1}{\overline{\gamma}_{\mathrm{r}_1}}\right);\varpi^{\mathrm{r}_1}\right)\Bigg] \Bigg[1- \Phi_{\mathbf{R}_{\mathrm{r}_2}}\left(\varphi^{-1}\left(\frac{\tilde{R}^\mathrm{th}_2}{\overline{\gamma}_{\mathrm{r}_2}}\right),\dots,\varphi^{-1}\left(\frac{\tilde{R}^\mathrm{th}_2}{\overline{\gamma}_{\mathrm{r}_2}}\right);\varpi^{\mathrm{r}_2}\right)\Bigg]\label{eq-cdf2-asym}
\end{align}
\hrulefill
\end{figure*}
\end{corollary}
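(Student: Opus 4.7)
The plan is to start from Theorem~1 and replace each CCDF factor by its high-SNR asymptote, keeping the copula structure intact. Because Theorem~1 has already factored the OP into a product of three independent CCDF terms, the task reduces to identifying the asymptotic behavior of (i) $\overline{F}_{\kappa_{\mathrm{r}_i}^\mathrm{fama}}$ and (ii) the joint $\overline{F}_{\gamma_{\mathrm{r}_1}^\mathrm{fama},\gamma_{\mathrm{r}_2}^\mathrm{fama}}$ when $\overline{\gamma}_{\mathrm{r}_i}\to\infty$ with $\overline{\zeta}_{\mathrm{r}_i}$ held fixed. Since the Gaussian-copula representations in \eqref{eq-cdf} and \eqref{eq-cdf-kappa1} keep $\Phi_{\mathbf{R}_{\mathrm{r}_i}}$ and the correlation matrix $\mathbf{R}_{\mathrm{r}_i}$ untouched, the limit acts only on the arguments fed into the quantile $\varphi^{-1}$.

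First I would Taylor-expand the two marginal CDFs. For the Rayleigh marginal \eqref{eq-cdf-delta}, $1-\mathrm{e}^{-x/\overline{\gamma}_{\mathrm{r}_i}}\sim x/\overline{\gamma}_{\mathrm{r}_i}$, so $F_{\gamma_{\mathrm{r}_i}^{n_{\mathrm{r}_i}}}(\tilde{R}^\mathrm{th}_i)\to \tilde{R}^\mathrm{th}_i/\overline{\gamma}_{\mathrm{r}_i}$. For the hypoexponential marginal \eqref{eq-cdf-kappa} I expand only the desired-signal term, $\overline{\gamma}_{\mathrm{r}_i}\mathrm{e}^{-x/\overline{\gamma}_{\mathrm{r}_i}}=\overline{\gamma}_{\mathrm{r}_i}-x+O(1/\overline{\gamma}_{\mathrm{r}_i})$, and leave $\overline{\zeta}_{\mathrm{r}_i}\mathrm{e}^{-x/\overline{\zeta}_{\mathrm{r}_i}}$ as is. After combining with $\Delta_{\mathrm{r}_i}=\overline{\gamma}_{\mathrm{r}_i}-\overline{\zeta}_{\mathrm{r}_i}$ in the denominator and discarding sub-leading $O(1/\overline{\gamma}_{\mathrm{r}_i})$ contributions, the leading order collapses to $F_{\kappa_{\mathrm{r}_i}^{\hat{n}_{\mathrm{r}_i}}}(x)\sim (x+\overline{\zeta}_{\mathrm{r}_i}\mathrm{e}^{-x/\overline{\zeta}_{\mathrm{r}_i}})/\Delta_{\mathrm{r}_i}$, which is exactly the quantity appearing inside $\varphi^{-1}$ in \eqref{eq-cdf1-asym}.

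Next I would substitute these asymptotes back into the copula expressions. Every port of a given receiver shares the same marginal law, so the $N_{\mathrm{r}_i}$ arguments of $\Phi_{\mathbf{R}_{\mathrm{r}_i}}$ collapse to a single repeated value, giving \eqref{eq-cdf1-asym} directly. For the joint CCDF $\overline{F}_{\gamma_{\mathrm{r}_1}^\mathrm{fama},\gamma_{\mathrm{r}_2}^\mathrm{fama}}$, the channels observed by receivers $\mathrm{r}_1$ and $\mathrm{r}_2$ are statistically independent, so the joint CCDF factorizes as the product of the two single-receiver Gaussian copulas evaluated at $\tilde{R}^\mathrm{th}_1/\overline{\gamma}_{\mathrm{r}_1}$ and $\tilde{R}^\mathrm{th}_2/\overline{\gamma}_{\mathrm{r}_2}$ respectively, reproducing \eqref{eq-cdf2-asym}. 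Plugging these three asymptotic CCDFs into the Theorem~1 template completes the derivation.

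The main obstacle I anticipate is the bookkeeping in the hypoexponential expansion: both numerator and denominator diverge with $\overline{\gamma}_{\mathrm{r}_i}$, so one must carefully identify the surviving $O(1)$ terms, and then justify that, after the nonlinear quantile transformation $\varphi^{-1}$, the dropped $-\overline{\zeta}_{\mathrm{r}_i}/\Delta_{\mathrm{r}_i}$ piece produces only vanishing corrections to $\Phi_{\mathbf{R}_{\mathrm{r}_i}}$, so that the exact argument displayed in \eqref{eq-cdf1-asym} is recovered rather than a distorted variant of it.
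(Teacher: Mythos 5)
Your proposal follows the same route as the paper's proof: a first-order Taylor expansion of $\mathrm{e}^{-x/\overline{\gamma}_{\mathrm{r}_i}}$ in the exponential and hypoexponential marginal CDFs, followed by substitution of the resulting asymptotic marginals into the Gaussian-copula expressions \eqref{eq-cdf} and \eqref{eq-cdf-kappa1}, with the joint CCDF factorizing across the two statistically independent receivers exactly as you describe. In fact your bookkeeping concern is well placed and goes slightly beyond the paper: carrying out the expansion exactly gives $F_{\kappa_{\mathrm{r}_i}^{\hat{n}_{\mathrm{r}_i}}}(x)\approx\frac{1}{\Delta_{\mathrm{r}_i}}\left(x+\overline{\zeta}_{\mathrm{r}_i}\mathrm{e}^{-x/\overline{\zeta}_{\mathrm{r}_i}}-\overline{\zeta}_{\mathrm{r}_i}\right)$, and the $-\overline{\zeta}_{\mathrm{r}_i}/\Delta_{\mathrm{r}_i}$ piece you flag as needing justification is silently dropped in the paper's own derivation of \eqref{eq-cdf1-asym}.
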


\begin{proof}
Given that $\gamma^{n_{\mathrm{r}_i}}_{\mathrm{r}_i}$ and $	\kappa^{\hat{n}_{\mathrm{r}_i}}_{\mathrm{r}_i}$ follow the exponential and hypoexponential distributions, respectively, their corresponding CDFs in the high SNR regime can be derived with the help of the Taylor series expansion. Therefore, as $\overline{\gamma}_{\mathrm{r}_i}\rightarrow\infty$, the term $\mathrm{e}^{-{\tilde{R}^\mathrm{th}_\nu}/\overline{\gamma}_{\mathrm{r}_i}}$ for $\nu\in\left\{1,2,\textrm{sum}\right\}$ becomes very small such that we have
\begin{align}
\exp\left(-\frac{{\tilde{R}^\mathrm{th}_\nu}}{\overline{\gamma}_{\mathrm{r}_i}}\right)\approx 1-\frac{\tilde{R}^\mathrm{th}_\nu}{\overline{\gamma}_{\mathrm{r}_i}}, \quad \nu\in\left\{1,2,\textrm{sum}\right\}.\label{eq-exp}
\end{align}
By substituting \eqref{eq-exp} into \eqref{eq-cdf-delta} and \eqref{eq-cdf-kappa}, the marginal CDFs of   $\gamma_{\mathrm{r}_i}^{n_{\mathrm{r}_i}}$ and $\kappa_{\mathrm{r}_i}^{\hat{n}_{\mathrm{r}_i}}$ at high SNR are, respectively, given by
\begin{align}\label{eq-cdf-kappa-inf}
F_{\kappa_{\mathrm{r}_i}^{\hat{n}_{\mathrm{r}_i}}}^\infty\left(\tilde{R}^\mathrm{th}_\mathrm{sum}\right)=\frac{1}{\Delta_{\mathrm{r}_i}}\left(\tilde{R}^\mathrm{th}_\mathrm{sum}+\overline{\zeta}_{\mathrm{r}_i}\exp\left(-\frac{\tilde{R}^\mathrm{th}_\mathrm{sum}}{\overline{\zeta}_{\mathrm{r}_i}}\right)\right),
\end{align}
and
\begin{align}\label{eq-cdf-gamma-inf}
F_{\gamma^{\check{n}_{\mathrm{r}_i}}_{\mathrm{r}_i}}^\infty\left(\tilde{R}^\mathrm{th}_i\right)=\frac{\tilde{R}^\mathrm{th}_i}{\overline{\gamma}_{\mathrm{r}_i}}.
\end{align}
Now, by inserting \eqref{eq-cdf-gamma-inf} into \eqref{eq-cdf} and plugging \eqref{eq-cdf-kappa-inf} into \eqref{eq-cdf-kappa1}, $\overline{F}_{\kappa_{\mathrm{r}_i}^\mathrm{fama}}^\infty\left(\tilde{R}^\mathrm{th}_\mathrm{sum}\right)$ and $\overline{F}_{\gamma_{\mathrm{r}_1}^\mathrm{fama},\gamma^\mathrm{fama}_{\mathrm{r}_2}}^\infty\left(\tilde{R}^\mathrm{th}_1,\tilde{R}^\mathrm{th}_2\right)$ are obtained and the proof becomes complete. 
\end{proof}

\subsection{DOR}
The DOR is a crucial metric in the design of various wireless communication scenarios, particularly in ultra-reliable and low-latency communication systems. It is defined as the probability that the time required to transmit a specific amount of data $R$ over a wireless channel with bandwidth $B$ exceeds a predefined threshold time $T_\mathrm{th}$ \cite{Yang2019}. Mathematically, this is expressed as $\Pr\left(T_\mathrm{d}>T_\mathrm{th}\right)$, where $T_\mathrm{d}=\frac{R}{BC}$ indicates the delivery time such that $C$ denotes the channel capacity \cite{ghadi2024performance2}. This expression captures the essential relationship between data transmission time, channel bandwidth, and signal quality.

Thus, for the considered FAMA-IC, the DOR is achieved if at least one of the following events occurs:
\begin{subequations}
\begin{align}
\mathcal{T}_1:& \frac{R_1}{B_1C_1}> T_1^\mathrm{th},\\
\mathcal{T}_2:& \frac{R_2}{B_2C_2}> T_2^\mathrm{th},\\
\mathcal{T}_3:& \frac{R_1+R_2}{B_\mathrm{sum}C_\mathrm{sum}}> T_\mathrm{sum}^\mathrm{th},
\end{align}
\end{subequations}
where $C_\nu$ for $\nu\in\left\{1,2,\mathrm{sum}\right\}$ is defined in \eqref{eq-capacity}, and $B_\nu$ and $R_i$ denote the corresponding bandwidth and rates, respectively. Therefore, the DOR for FAMA-IC is defined as
\begin{align}\label{eq-dor-def}
	P_\mathrm{dor}=\Pr\left(\mathcal{T}_1\cup \mathcal{T}_2\cup \mathcal{T}_3\right).
\end{align}
\begin{theorem}
The OP for the FAMA-IC system under the strong interference scenario is given by
\begin{multline}
P_\mathrm{dor}=1-
\overline{F}_{\kappa_\mathrm{r_1}^\mathrm{fama}}\left(\hat{T}^\mathrm{th}_\mathrm{sum}\right)\\
\times\overline{F}_{\kappa_\mathrm{r_2}^\mathrm{fama}}\left(\hat{T}^\mathrm{th}_\mathrm{sum}\right)\overline{F}_{\gamma_{\mathrm{r}_1}^\mathrm{fama},\gamma^\mathrm{fama}_{\mathrm{r}_2}}\left(\hat{T}^\mathrm{th}_1,\hat{T}^\mathrm{th}_2\right),
\end{multline}
in which $\hat{T}_i=\mathrm{e}^{\frac{R_i\ln 2}{BT_i^\mathrm{th}}}-1$ and $\hat{T}_\mathrm{sum}=\mathrm{e}^{\frac{\left(R_1+R_2\right)\ln 2}{B_\mathrm{sum}T_\mathrm{sum}^\mathrm{th}}}$.
\end{theorem}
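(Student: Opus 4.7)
The plan is to observe that the three delay events $\mathcal{T}_1, \mathcal{T}_2, \mathcal{T}_3$ are structurally identical to the outage events $\mathcal{E}_1, \mathcal{E}_2, \mathcal{E}_3$ of Theorem 1 after a change of threshold, so the derivation can be carried over almost verbatim by replacing $\tilde{R}^\mathrm{th}_\nu$ with $\hat{T}^\mathrm{th}_\nu$.

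First I would rewrite each $\mathcal{T}_j$ as a condition on the FAMA-enhanced SNRs and INRs. The inequality $R_j/(B_j C_j) > T_j^\mathrm{th}$ is equivalent to $C_j < R_j/(B_j T_j^\mathrm{th})$, and since each $C_j$ equals $\log_2$ of one plus a channel-dependent quantity, inverting the logarithm gives a threshold $2^{R_j/(B_j T_j^\mathrm{th})} - 1 = e^{R_j \ln 2 / (B_j T_j^\mathrm{th})} - 1$, which matches the $\hat{T}_j$ in the theorem statement. Hence $\mathcal{T}_1$ is the event $\{\gamma_{\mathrm{r}_1}^\mathrm{fama} < \hat{T}_1\}$, $\mathcal{T}_2$ is $\{\gamma_{\mathrm{r}_2}^\mathrm{fama} < \hat{T}_2\}$, and $\mathcal{T}_3$ is $\{\min\{\kappa_{\mathrm{r}_1}^\mathrm{fama},\kappa_{\mathrm{r}_2}^\mathrm{fama}\} < \hat{T}_\mathrm{sum}\}$, which are exactly the events appearing in Theorem 1 with the new thresholds in place of the rate thresholds.

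Next I would apply the complement identity $\Pr(\mathcal{T}_1 \cup \mathcal{T}_2 \cup \mathcal{T}_3) = 1 - \Pr(\mathcal{T}_1^c \cap \mathcal{T}_2^c \cap \mathcal{T}_3^c)$, which is the most direct route to the product form of the target expression (and avoids unwinding the full inclusion-exclusion expansion as was done in the OP proof). Under the same cross-receiver independence invoked in Theorem 1, the joint complementary probability factorizes, giving $\overline{F}_{\kappa_{\mathrm{r}_1}^\mathrm{fama}}(\hat{T}^\mathrm{th}_\mathrm{sum})$ for the first receiver's sum-rate margin, $\overline{F}_{\kappa_{\mathrm{r}_2}^\mathrm{fama}}(\hat{T}^\mathrm{th}_\mathrm{sum})$ for the second's, and the joint CCDF $\overline{F}_{\gamma_{\mathrm{r}_1}^\mathrm{fama},\gamma_{\mathrm{r}_2}^\mathrm{fama}}(\hat{T}^\mathrm{th}_1,\hat{T}^\mathrm{th}_2)$ for the individual-rate margins, thereby reproducing the claimed formula. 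The marginal CCDFs $\overline{F}_{\kappa_{\mathrm{r}_i}^\mathrm{fama}}$ are supplied by \eqref{eq-cdf-kappa1}, while the joint CCDF of $(\gamma_{\mathrm{r}_1}^\mathrm{fama},\gamma_{\mathrm{r}_2}^\mathrm{fama})$ follows from \eqref{eq-cdf} combined with the independence of the fades observed at the two receivers.

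The only delicate point is the bookkeeping of the thresholds: one must verify that the exponent $R_j \ln 2/(B_j T_j^\mathrm{th})$ arises correctly when passing from base $2$ to base $e$ via $2^x = e^{x\ln 2}$, and flag a minor typographical slip in the statement whereby $\hat{T}_\mathrm{sum}$ appears to be missing the trailing $-1$ that is present in $\hat{T}_i$ (for full consistency with inverting $\log_2(1+\kappa)$, it should read $e^{(R_1+R_2)\ln 2/(B_\mathrm{sum} T_\mathrm{sum}^\mathrm{th})} - 1$). Beyond this clerical care, there is no substantive analytical obstacle; the claim is essentially a relabeling of Theorem 1 in which delay thresholds play the role of rate thresholds.
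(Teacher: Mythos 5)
Your proposal is correct and follows essentially the same route as the paper: map each $\mathcal{T}_j$ to a threshold event on $\gamma_{\mathrm{r}_i}^\mathrm{fama}$ or $\kappa_{\mathrm{r}_i}^\mathrm{fama}$, then use independence to collapse the union probability into the product of CCDFs (your De Morgan shortcut and the paper's inclusion--exclusion expansion are equivalent under that independence assumption). Your flag on $\hat{T}_\mathrm{sum}$ is also well taken --- the intermediate step in the paper's own derivation shows the threshold should be $\mathrm{e}^{(R_1+R_2)\ln 2/(B_\mathrm{sum}T_\mathrm{sum}^\mathrm{th})}-1$, so the statement's missing ``$-1$'' (and the stray factor of $2$ appearing later in the paper's proof) are indeed typographical.
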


\begin{proof}
By extending the definition of DOR in \eqref{eq-dor-def}, we have
\begin{multline}
P_\mathrm{dor}=\Pr\left(\mathcal{T}_1\right)+\Pr\left(\mathcal{T}_2\right)+\Pr\left(\mathcal{T}_3\right)-\Pr\left(\mathcal{T}_1\cap\mathcal{T}_2\right)\\
-\Pr\left(\mathcal{T}_1\cap\mathcal{T}_3\right)-\Pr\left(\mathcal{T}_2\cap\mathcal{T}_3\right)+\Pr\left(\mathcal{T}_1\cap\mathcal{T}_2\cap\mathcal{T}_3\right),
\end{multline}
where $\Pr\left(\mathcal{T}_1\right)$ can be mathematically obtained as
\begin{subequations}
\begin{align}
\Pr\left(\mathcal{T}_1\right) &= \Pr\left(\frac{R_1}{B_1\log_2\left(1+\gamma_{\mathrm{r}_1}^{\mathrm{fama}}\right)}> T_1^\mathrm{th}\right)\\
& = \Pr\left(\gamma_{\mathrm{r}_1}^\mathrm{fama}\leq \exp\left(\frac{R_1\ln 2}{B_1T_1^\mathrm{th}}\right)-1\right)\\
&=F_{\gamma_{\mathrm{r}_1}^\mathrm{fama}}\left(\hat{T}_1\right)
\end{align}
\end{subequations}
where $\hat{T}_1=\mathrm{e}^{\frac{R_1\ln 2}{B_1T_1^\mathrm{th}}}-1$. Similarly, $\Pr\left(\mathcal{T}_2\right)=F_{\gamma_{\mathrm{r}_2}^\mathrm{fama}}\left(\hat{T}_2\right)$ in which $\hat{T}_2=\mathrm{e}^{\frac{R_2\ln 2}{B_1T_2^\mathrm{th}}}-1$. Moreover, $\Pr\left(\mathcal{T}_3\right)$ is derived as
\begin{subequations}
\begin{align}\notag
&\Pr\left(\mathcal{T}_3\right) =\\
& \Pr\hspace{-1mm}\left(\hspace{-1mm}\tfrac{\left(R_1+R_2\right)}{B_\mathrm{sum}\min\left\{\log_2\left(1+\kappa_{\mathrm{r}_1}^{\mathrm{fama}}\right),\log_2\left(1+\kappa_{\mathrm{r}_2}^{\mathrm{fama}}\right)\right\}}\hspace{-1mm}>\hspace{-1mm} T_\mathrm{sum}^\mathrm{th}\hspace{-1.5mm}\right)\\\notag
& = 1 - \Pr\bigg(\min\left\{\log_2\left(1+\kappa_{\mathrm{r}_1}^{\mathrm{fama}}\right),\log_2\left(1+\kappa_{\mathrm{r}_2}^{\mathrm{fama}}\right)\right\}\\
&\hspace{4cm}<\frac{\left(R_1+R_2\right)\ln 2}{B_\mathrm{sum}T_\mathrm{sum}^\mathrm{th}}\bigg)\\\notag
& = 1-\Pr\left(\kappa_{\mathrm{r}_1}>\exp\left(\frac{\left(R_1+R_2\right)\ln 2}{B_\mathrm{sum}T_\mathrm{sum}^\mathrm{th}}\right)-1\right)\\
&\hspace{1.2cm}\times \Pr\left(\kappa_{\mathrm{r}_2}>\exp\left(\frac{\left(R_1+R_2\right)\ln 2}{B_\mathrm{sum}T_\mathrm{sum}^\mathrm{th}}\right)-1\right)\\
& = 1-\overline{F}_{\kappa_\mathrm{r_1}^\mathrm{fama}}\left(\hat{T}_\mathrm{sum}\right)\overline{F}_{\kappa_\mathrm{r_2}^\mathrm{fama}}\left(\hat{T}_\mathrm{sum}\right), 
\end{align}
\end{subequations}
where $\hat{T}_\mathrm{sum}=\mathrm{e}^{\frac{2\left(R_1+R_2\right)\ln 2}{B_\mathrm{sum}T_\mathrm{sum}^\mathrm{th}}}$. Now, given the independence of all events $\mathcal{T}_j$ for $j\in\left\{1,2,3\right\}$, other probabilities can be obtained according to $\Pr\left(\mathcal{T}_j\right)$. Hence, by considering the CDFs of $\gamma^\mathrm{fama}_{\mathrm{r}_i}$, $\zeta^\mathrm{fama}_{\mathrm{r}_i}$, and $\kappa^\mathrm{fama}_{\mathrm{r}_i}$, and after some mathematical simplification, the proof is completed.
\end{proof}

\begin{corollary}
The asymptotic DOR in the high SNR regime for the FAMA-IC under the strong interference scenario is given by
\begin{multline}
P_\mathrm{dor}^\infty=1-
\overline{F}_{\kappa_\mathrm{r_1}^\mathrm{fama}}^\infty\left(\hat{T}^\mathrm{th}_\mathrm{sum}\right)\\
\times\overline{F}_{\kappa_\mathrm{r_2}^\mathrm{fama}}^\infty\left(\hat{T}^\mathrm{th}_\mathrm{sum}\right)\overline{F}_{\gamma_{\mathrm{r}_1}^\mathrm{fama},\gamma^\mathrm{fama}_{\mathrm{r}_2}}^\infty\left(\hat{T}^\mathrm{th}_1,\hat{T}^\mathrm{th}_2\right).
\end{multline}
\end{corollary}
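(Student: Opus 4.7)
The plan is to mirror the derivation used for Corollary \ref{col-out}, with the delay-induced thresholds $\hat{T}^\mathrm{th}_1$, $\hat{T}^\mathrm{th}_2$, $\hat{T}^\mathrm{th}_\mathrm{sum}$ replacing the rate-based thresholds $\tilde{R}^\mathrm{th}_1$, $\tilde{R}^\mathrm{th}_2$, $\tilde{R}^\mathrm{th}_\mathrm{sum}$ throughout. Since the finite-SNR DOR formula in Theorem 2 is structurally identical to the OP formula in Theorem 1 under this renaming (both arise by decomposing a union of three independent events and rewriting the individual probabilities via the marginal and joint CCDFs of the FAS-selected variables), the asymptotic statement reduces to transporting the high-SNR steps of Corollary \ref{col-out} to the new arguments.

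First, I would let $\overline{\gamma}_{\mathrm{r}_i} \to \infty$ while keeping the strong-interference ordering in force so that $\Delta_{\mathrm{r}_i}$ remains non-zero and the hypoexponential marginals stay well-defined. Applying the same first-order Taylor expansion (\ref{eq-exp}) to (\ref{eq-cdf-delta}) and (\ref{eq-cdf-kappa}) evaluated at the delay-induced thresholds yields
\begin{equation*}
F^\infty_{\gamma^{\check{n}_{\mathrm{r}_i}}_{\mathrm{r}_i}}\!\left(\hat{T}^\mathrm{th}_i\right)=\frac{\hat{T}^\mathrm{th}_i}{\overline{\gamma}_{\mathrm{r}_i}},
\end{equation*}
and
\begin{equation*}
F^\infty_{\kappa^{\hat{n}_{\mathrm{r}_i}}_{\mathrm{r}_i}}\!\left(\hat{T}^\mathrm{th}_\mathrm{sum}\right)=\frac{1}{\Delta_{\mathrm{r}_i}}\!\left(\hat{T}^\mathrm{th}_\mathrm{sum}+\overline{\zeta}_{\mathrm{r}_i}\mathrm{e}^{-\hat{T}^\mathrm{th}_\mathrm{sum}/\overline{\zeta}_{\mathrm{r}_i}}\right),
\end{equation*}
which are the exact counterparts of (\ref{eq-cdf-gamma-inf}) and (\ref{eq-cdf-kappa-inf}) under the threshold relabelling.

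Next, I would feed these asymptotic marginals back through the Gaussian-copula joint representations (\ref{eq-cdf}) and (\ref{eq-cdf-kappa1}) to produce the CCDFs $\overline{F}^\infty_{\kappa^\mathrm{fama}_{\mathrm{r}_i}}(\hat{T}^\mathrm{th}_\mathrm{sum})$ and $\overline{F}^\infty_{\gamma^\mathrm{fama}_{\mathrm{r}_1},\gamma^\mathrm{fama}_{\mathrm{r}_2}}(\hat{T}^\mathrm{th}_1,\hat{T}^\mathrm{th}_2)$, whose explicit forms are structurally identical to (\ref{eq-cdf1-asym})--(\ref{eq-cdf2-asym}) with $\tilde{R}^\mathrm{th}\mapsto\hat{T}^\mathrm{th}$. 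Substituting them into the finite-SNR DOR expression of Theorem 2 immediately gives the claimed asymptotic formula. The main obstacle is not analytical but structural bookkeeping: one must verify that independence across users and the strong-interference ordering both persist along the high-SNR limit, so that every step of Corollary \ref{col-out} remains valid under the threshold relabelling; no new machinery beyond what was already established for the OP is needed.
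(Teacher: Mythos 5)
Your proposal is correct and follows exactly the route the paper takes: the paper's proof is simply ``Following the same approach used to prove Corollary \ref{col-out}, the proof is completed,'' and you have spelled out precisely those steps (Taylor expansion of the marginals at the delay-induced thresholds, reinsertion into the Gaussian-copula CCDFs, and substitution into the finite-SNR DOR of Theorem 2). No discrepancy to report.
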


\begin{proof}
Following the same approach used to prove Corollary \ref{col-out}, the proof is completed. 
\end{proof}

\subsection{EC}
The EC (per bandwidth unit) for the system model under consideration is defined as
\begin{align}
\overline{\mathcal{C}}\left[\mathrm{bps/Hz}\right]\triangleq\mathbb{E}\left[\mathcal{C}\right]=\int_0^\infty\log_2\left(1+\eta\right)f_\eta(\eta)\mathrm{d}\eta, \label{eq-def-c}
\end{align}
where $\eta\in\left\{\gamma_{\mathrm{r}_i}^\mathrm{fama},\kappa_{\mathrm{r}_i}^\mathrm{fama}\right\}$ and $f_\eta\left(\eta\right)$ denotes the corresponding PDF. Therefore, the EC for the considered system model is derived and presented in the following theorem. 

\begin{theorem}
The EC for the FAMA-IC system under the strong interference scenario is given by
\begin{align}
\overline{\mathcal{C}}_1&\approx\log_2\left(1+\overline{\gamma}_\mathrm{r_1}H_1\left(1-\frac{\varpi^{\mathrm{r_1}}H_1}{2N_\mathrm{r_1}}\right)\right),\label{eq-c1}\\
\overline{\mathcal{C}}_2&\approx\log_2\left(1+\overline{\gamma}_\mathrm{r_2}H_2\left(1-\frac{\varpi^{\mathrm{r_2}}H_2}{2N_\mathrm{r_2}}\right)\right),\label{eq-c2}
\end{align}
and
\begin{multline}
\overline{\mathcal{C}}_\mathrm{sum} \approx \min\left\{\log_2\left(1+\overline{\kappa}_\mathrm{r_1}H_1\left(1-\frac{\varpi^{\mathrm{r_1}}H_1}{2N_\mathrm{r_1}}\right)\right)\right.,\\
\left.\log_2\left(1+\overline{\kappa}_\mathrm{r_2}H_2\left(1-\frac{\varpi^{\mathrm{r_2}}H_2}{2N_\mathrm{r_2}}\right)\right)\right\},\label{eq-csum}
\end{multline}
in which
\begin{align}
H_1=\sum_{n_\mathrm{r_1}=1}^{N_\mathrm{r_1}} \frac{1}{n_\mathrm{r_1}},~\text{and}~
H_2=\sum_{n_\mathrm{r_2}=1}^{N_\mathrm{r_2}} \frac{1}{n_\mathrm{r_2}},\label{eq-h}
\end{align}
where $\overline{\kappa}_{\mathrm{r}_i}=\mathbb{E}\left[\kappa_{\mathrm{r}_i}\right]=\overline{\gamma}_{\mathrm{r}_i}+\overline{\zeta}_{\mathrm{r}_i}$ defines the average of ${\kappa}_{\mathrm{r}_i}$.
\end{theorem}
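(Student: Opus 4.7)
The plan is to avoid integrating against the Gaussian-copula PDFs \eqref{eq-pdf-gaussian}–\eqref{eq-pdf-gaussian-kappa}, which would otherwise force an $N_{\mathrm{r}_i}$-dimensional multivariate-normal integration, and instead exploit the standard heuristic
\begin{equation*}
\mathbb{E}[\log_2(1+\eta)] \approx \log_2\bigl(1+\mathbb{E}[\eta]\bigr),
\end{equation*}
which is tight whenever the distribution of $\eta$ is concentrated. Under this approximation, the problem reduces to computing only the first moments $\mathbb{E}[\gamma_{\mathrm{r}_i}^{\mathrm{fama}}]$ and $\mathbb{E}[\kappa_{\mathrm{r}_i}^{\mathrm{fama}}]$, i.e., the expected maximum of $N_{\mathrm{r}_i}$ correlated exponential (resp. hypoexponential) RVs.

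For the baseline independent case ($\varpi^{\mathrm{r}_i}\to 0$), classical order statistics for i.i.d.\ exponentials with mean $\overline{\gamma}_{\mathrm{r}_i}$ yield $\mathbb{E}[\max_{n_{\mathrm{r}_i}}\gamma_{\mathrm{r}_i}^{n_{\mathrm{r}_i}}] = \overline{\gamma}_{\mathrm{r}_i} H_i$ with $H_i$ defined in \eqref{eq-h}. For the hypoexponential case, since $\mathbb{E}[\kappa_{\mathrm{r}_i}^{\hat{n}_{\mathrm{r}_i}}] = \overline{\gamma}_{\mathrm{r}_i} + \overline{\zeta}_{\mathrm{r}_i} = \overline{\kappa}_{\mathrm{r}_i}$, one can treat the maximum on the same footing and obtain $\mathbb{E}[\max_{\hat{n}_{\mathrm{r}_i}}\kappa_{\mathrm{r}_i}^{\hat{n}_{\mathrm{r}_i}}]\approx \overline{\kappa}_{\mathrm{r}_i} H_i$. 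The correlation between ports is then incorporated heuristically by a multiplicative penalty $\bigl(1 - \varpi^{\mathrm{r}_i} H_i/(2N_{\mathrm{r}_i})\bigr)$ on the independent mean. This correction is motivated by expanding the Gaussian-copula density to first order around the independent point, where the off-diagonals of $\mathbf{R}_{\mathrm{r}_i}$ enter linearly (consistent with $\varpi^{\mathrm{r}_i}\approx\varrho^{\mathrm{r}_i}$ via Cholesky); the $1/(2N_{\mathrm{r}_i})$ factor arises from pairwise coupling ($\tfrac{1}{2}$) averaged over $N_{\mathrm{r}_i}$ ports, and it correctly captures the intuition that correlated port peaks tend to coincide, reducing the gain from port selection.

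Plugging $\mathbb{E}[\gamma_{\mathrm{r}_i}^{\mathrm{fama}}]\approx \overline{\gamma}_{\mathrm{r}_i}H_i\bigl(1-\varpi^{\mathrm{r}_i}H_i/(2N_{\mathrm{r}_i})\bigr)$ and $\mathbb{E}[\kappa_{\mathrm{r}_i}^{\mathrm{fama}}]\approx \overline{\kappa}_{\mathrm{r}_i}H_i\bigl(1-\varpi^{\mathrm{r}_i}H_i/(2N_{\mathrm{r}_i})\bigr)$ into the Jensen-type step yields \eqref{eq-c1}–\eqref{eq-c2} directly. For the sum-rate expression, I would apply the same approximation to each of the two logarithmic arguments in \eqref{eq-capacity}c, together with the further swap $\mathbb{E}[\min\{X,Y\}]\approx \min\{\mathbb{E}[X],\mathbb{E}[Y]\}$, which gives the closed form \eqref{eq-csum}. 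This swap is itself approximate (concavity of $\min$ implies it is an upper bound in general), but it is natural and becomes accurate when the two arms are well separated—which is the operating regime under strong interference.

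The main obstacle is the correlation correction: I do not see a clean route to the exact factor $\bigl(1-\varpi^{\mathrm{r}_i}H_i/(2N_{\mathrm{r}_i})\bigr)$ from first principles, because the expected value of the maximum of $N_{\mathrm{r}_i}$ Gaussian-copula-coupled exponentials has no elementary closed form. The argument above is essentially a first-order perturbation argument that collapses all pairwise coupling terms into a single effective scalar $\varpi^{\mathrm{r}_i}$; validating that this captures the leading-order behavior will require numerical comparison in Section~\ref{sec-num}, and the approximation is expected to be sharper at large $N_{\mathrm{r}_i}$ (where port selection concentrates the maximum) and weaker at very strong correlation (where higher-order terms in $\varpi^{\mathrm{r}_i}$ would start to matter).
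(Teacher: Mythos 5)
Your proposal follows essentially the same route as the paper: a Jensen-type first-moment approximation $\overline{\mathcal{C}}\approx\log_2(1+\mathbb{E}[\eta])$, the harmonic-number formula $\overline{\gamma}_{\mathrm{r}_i}H_i$ (resp.\ $\overline{\kappa}_{\mathrm{r}_i}H_i$) for the expected maximum in the independent case, and the same heuristic correlation penalty $\bigl(1-\varpi^{\mathrm{r}_i}H_i/(2N_{\mathrm{r}_i})\bigr)$, whose lack of a first-principles derivation the paper itself concedes and defers to numerical validation in its Remark~1. If anything you are slightly more explicit than the paper in flagging the additional swap $\mathbb{E}[\min\{X,Y\}]\approx\min\{\mathbb{E}[X],\mathbb{E}[Y]\}$ needed for $\overline{\mathcal{C}}_{\mathrm{sum}}$ and in treating $\overline{\kappa}_{\mathrm{r}_i}H_i$ as an approximation rather than an exact order-statistics identity for hypoexponentials.
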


\begin{proof}
In order to derive the exact EC, we need to insert the PDFs from \eqref{eq-pdf-gaussian} and \eqref{eq-pdf-gaussian-kappa} into \eqref{eq-def-c} and evaluate the integral. However, due to the complexity of $f_{\gamma^\mathrm{fama}_{\mathrm{r}_i}}\left(r\right)$ and  $f_{\kappa^\mathrm{fama}_{\mathrm{r}_i}}\left(r\right)$ that include the joint PDF of multivariate normal distribution, it is mathematically intractable to solve the provided integrals in closed form. For this purpose, we apply Jensen's inequality into \eqref{eq-pdf-gaussian} to derive a tight approximation of the EC, i.e.,
\begin{align}
\overline{\mathcal{C}}\approx\log_2\left(1+\mathbb{E}\left[\mathrm{\eta}\right]\right).\label{eq-jensen}
\end{align}
Therefore, we need to determine the expectation of $\gamma_{{\mathrm{r}}_i}^\mathrm{fama}$ and $\kappa_{{\mathrm{r}}_i}^\mathrm{fama}$, which involves the expectation of the maximum of $N_{\mathrm{r}_i}$ correlated RVs. To do so, we introduce a heuristic approach such that we first derive the expectation of the maximum of $N_{\mathrm{r}_i}$ independent RVs and then it is extended to the correlated scenario by adding a heuristic term based on the Gaussian copula \cite{falk2010laws,hennig2009expectation}. As for $\gamma_{{\mathrm{r}}_i}^\mathrm{ind}$, which includes $N_{\mathrm{r}_i}$ exponential RVs, the expected value for the independent case is given by
\begin{align}
\mathbb{E}\left[\gamma_{{\mathrm{r}}_i}^\mathrm{ind}\right]=\overline{\gamma}_{\mathrm{r}_i}H_i=\overline{\gamma}_{\mathrm{r}_i}\sum_{n_{\mathrm{r}_i}=1}^{N_{\mathrm{r}_i}}\frac{1}{n_{\mathrm{r}_i}},
\end{align}
where $H_i$ is the $N_{\mathrm{r}_i}$-th harmonic number. Now, when the exponential RVs become correlated, the expected maximum decreases because the dependence among the RVs reduces the variability of the maximum value. Therefore, this reduction in the expected maximum needs to be accounted for the correlation case. In our correlation model, for the Gaussian copula with the correlation matrix $\mathbf{R}_{\mathrm{r}_i}$, the parameter $\varpi^{\mathrm{r}_i}$ influences how the RVs are related. Therefore, given that the average correlation, e.g., $\varpi^{\mathrm{r}_i}$, among the RVs has a diminishing effect on the maximum values, a heuristic approximation of the expectation  $\gamma_{{\mathrm{r}}_i}^\mathrm{fama}$ for the correlated case can be given by
\begin{align}
\mathbb{E}\left[\gamma_{\mathrm{r}_i}^\mathrm{fama}\right]\approx\overline{\gamma}_{\mathrm{r}_i}H_i\left(1-\frac{\varpi^{\mathrm{r}_i}H_{i}}{2N_{\mathrm{r}_i}}\right). \label{eq-exp1}
\end{align}
The term $\left(1-\frac{\varpi^{\mathrm{r}_i}H_{i}}{2N_{\mathrm{r}_i}}\right)$ is based on the statistical principle of the Gaussian copula and heuristic arguments that adjust the expected value of the maximum to account for the correlation among the RVs. In other words, the term $\frac{\varpi^{\mathrm{r}_i}H_{i}}{2N_{\mathrm{r}_i}}$ is an approximation that adjusts the expectation for the reduction caused by the average pairwise correlation $\varpi^{\mathrm{r}_i}$ among the RVs.

Following the same approach, we first derive the expected value of $\kappa_{\mathrm{r}_i}^\mathrm{fama}$ that involves $N_{\mathrm{r}_i}$ independent hypoexponential RVs as
\begin{align}
\mathbb{E}\left[\kappa_{{\mathrm{r}}_i}^\mathrm{ind}\right]=\overline{\kappa}_{\mathrm{r}_i}H_i=\overline{\kappa}_{\mathrm{r}_i}\sum_{n_{\mathrm{r}_i}=1}^{N_{\mathrm{r}_i}}\frac{1}{n_{\mathrm{r}_i}},
\end{align}
where $\overline{\kappa}_{\mathrm{r}_i}=\overline{\gamma}_{\mathrm{r}_i}+\overline{\zeta}_{\mathrm{r}_i}$ is the average of $\kappa_{\mathrm{r}_i}$. Next, we adjust the harmonic number $H_i$ by the term $\frac{\varpi^{\mathrm{r}_i}H_{i}}{2N_{\mathrm{r}_i}}$ to account for the correlation among the correlated RVs. Thus, we have
\begin{align}
\mathbb{E}\left[\kappa_{\mathrm{r}_i}^\mathrm{fama}\right]\approx\overline{\kappa}_{\mathrm{r}_i}H_i\left(1-\frac{\varpi^{\mathrm{r}_i}H_{i}}{2N_{\mathrm{r}_i}}\right). \label{eq-exp2}
\end{align}
By substituting the derived expectations from \eqref{eq-exp1} and \eqref{eq-exp2} to \eqref{eq-jensen}, we have \eqref{eq-c1}--\eqref{eq-csum} and the proof is completed. 
\end{proof}

\begin{remark}It is noteworthy that the approximation provided for the maximum of correlated exponential and hypoexponential RVs using the Gaussian copula is a heuristic approach, which attempts to incorporate the effect of correlation in a manner analogous to methods used for normal distributions. Nevertheless, the accuracy of the proposed analytical approach needs to be evaluated. In this regard, the best way to validate this accuracy is through empirical simulations. By generating correlated exponential and hypoexponential RVs using the Gaussian copula and comparing the simulated expectations of the maximum to the approximated values, we can gauge the accuracy of the analytical derivations. 
\end{remark}

\begin{remark}
In Fig.~\ref{fig-ex}, we compare the approximation and the empirical results of the expected value for $\gamma_{{\mathrm{r}}_i}^\mathrm{fama}$ and $\kappa_{{\mathrm{r}}_i}^\mathrm{fama}$, respectively. It can be seen that the proposed approximations are perfectly matched with the empirical expectations. Furthermore, we can observe that the expected value of the maximum value of $N_{\mathrm{r}_i}$ exponential and hypoexponential RVs, i.e., $\mathbb{E}\left[\gamma_{{\mathrm{r}}_i}^\mathrm{fama}\right]$ and $\mathbb{E}\left[\kappa_{{\mathrm{r}}_i}^\mathrm{fama}\right]$, respectively, increases with $N_{\mathrm{r}_i}$ when considering the harmonic number $H_i=\sum_{n_{\mathrm{r}_i}=1}^{N_{\mathrm{r}_i}} \frac{1}{n_{\mathrm{r}_i}}$. This is because as $N_{\mathrm{r}_i}$ increases, the harmonic number $H_i$ grows, leading to a higher expected maximum.  
\end{remark}

\begin{figure}
\centering
\hspace{0cm}\subfigure[Exponential distribution]{%
\includegraphics[width=0.5\textwidth]{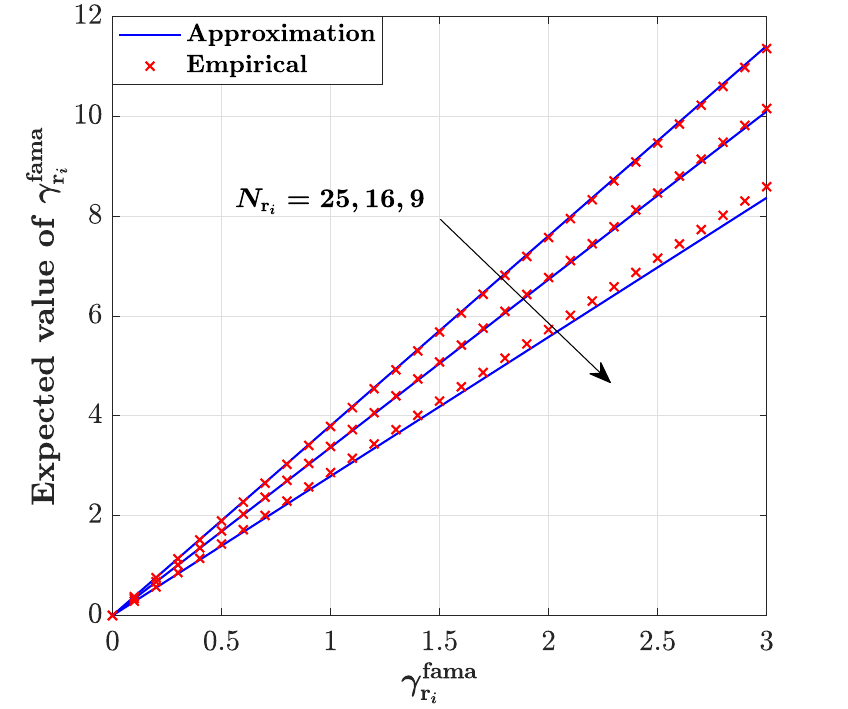}\label{fig_c_snr}%
}\hspace{0cm}
\subfigure[Hypoexponential distribution]{%
\includegraphics[width=0.5\textwidth]{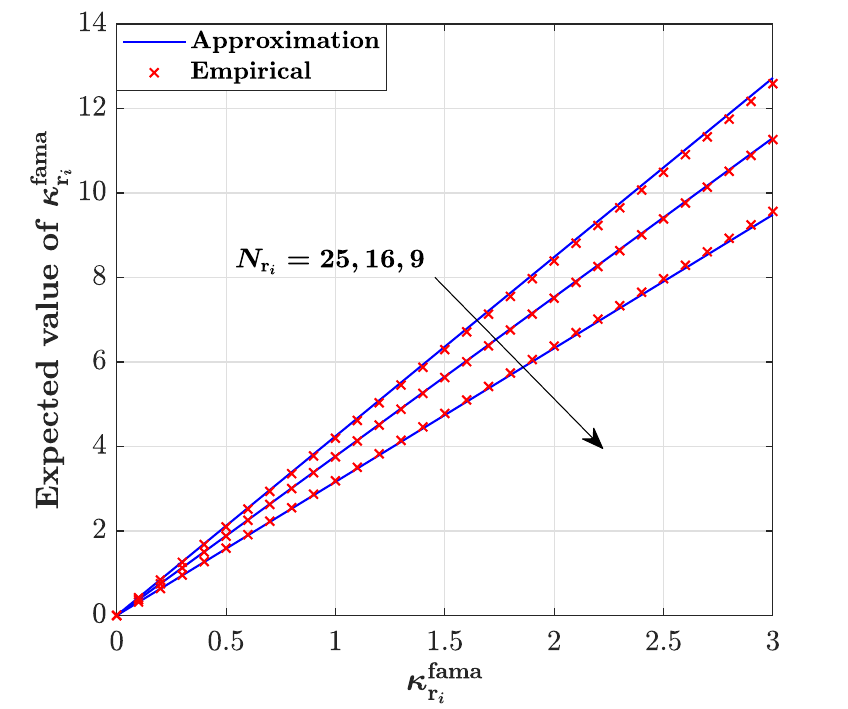}\label{fig_p_g}%
}\hspace{0cm}
\caption{The expected values of RVs (a) $\gamma_{{\mathrm{r}}_i}^\mathrm{fama}$ and (b) $\kappa_{{\mathrm{r}}_i}^\mathrm{fama}$.}\label{fig-ex}
\end{figure}

\begin{corollary}
The asymptotic EC in the high SNR regime for the FAMA-IC system under the strong interference scenario is given by
\begin{align}
\overline{\mathcal{C}}_1^\infty&\approx\log_2\left(\overline{\gamma}_\mathrm{r_1}H_1\left(1-\frac{\varpi^{\mathrm{r_1}}H_1}{2N_\mathrm{r_1}}\right)\right),\label{eq-c1asym}\\
\overline{\mathcal{C}}_2^\infty&\approx\log_2\left(\overline{\gamma}_\mathrm{r_2}H_2\left(1-\frac{\varpi^{\mathrm{r_2}}H_2}{2N_\mathrm{r_2}}\right)\right),\label{eq-c2-asym}
\end{align}
and
\begin{multline}
\overline{\mathcal{C}}_\mathrm{sum}^\infty \approx\min\left\{\log_2\left(\overline{\kappa}_\mathrm{r_1}H_1\left(1-\frac{\varpi^{\mathrm{r_1}}H_1}{2N_\mathrm{r_1}}\right)\right)\right.,\\
\left.\log_2\left(\overline{\kappa}_\mathrm{r_2}H_2\left(1-\frac{\varpi^{\mathrm{r_2}}H_2}{2N_\mathrm{r_2}}\right)\right)\right\},\label{eq-csum-asym}
\end{multline}
where $H_i$ has been defined in \eqref{eq-h}. 
\end{corollary}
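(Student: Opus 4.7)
The plan is to derive the asymptotic EC expressions directly from the exact (approximate) EC formulas established in the preceding Theorem, by passing to the limit $\overline{\gamma}_{\mathrm{r}_i}\to\infty$ while preserving the strong interference condition (so that $\overline{\zeta}_{\mathrm{r}_i}$ also scales appropriately and the Gaussian-copula/correlation structure is unaffected). The core observation is that every argument of the $\log_2$ in \eqref{eq-c1}--\eqref{eq-csum} has the form $1+\overline{\delta}_{\mathrm{r}_i}H_i\bigl(1-\varpi^{\mathrm{r}_i}H_i/(2N_{\mathrm{r}_i})\bigr)$ with $\overline{\delta}_{\mathrm{r}_i}\in\{\overline{\gamma}_{\mathrm{r}_i},\overline{\kappa}_{\mathrm{r}_i}\}$, and this product grows without bound as the average SNR/INR diverges, since $H_i$, $N_{\mathrm{r}_i}$ and $\varpi^{\mathrm{r}_i}$ are all independent of the average SNR and the parenthesized correlation-correction factor remains strictly positive and bounded away from zero for any admissible $\varpi^{\mathrm{r}_i}\in[0,1)$.

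First, I would recall from Theorem 3 the three exact approximations for $\overline{\mathcal{C}}_1$, $\overline{\mathcal{C}}_2$, and $\overline{\mathcal{C}}_\mathrm{sum}$, obtained via Jensen's inequality applied to \eqref{eq-def-c} together with the heuristic expectations in \eqref{eq-exp1} and \eqref{eq-exp2}. Next, I would invoke the standard high-SNR behaviour of the logarithm, namely $\log_2(1+x)=\log_2(x)+\log_2(1+1/x)$, so that $\log_2(1+x)\to\log_2(x)$ as $x\to\infty$ with an additive error of order $1/(x\ln 2)$. Applying this to each of the three EC expressions individually yields \eqref{eq-c1asym} and \eqref{eq-c2-asym} immediately.

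For the sum-rate component $\overline{\mathcal{C}}_\mathrm{sum}$ in \eqref{eq-csum}, the extra step is that the $\min\{\cdot,\cdot\}$ operator commutes with $\log_2(1+\cdot)\to\log_2(\cdot)$ under the same high-SNR limit, because both operands diverge simultaneously (since $\overline{\kappa}_{\mathrm{r}_1}$ and $\overline{\kappa}_{\mathrm{r}_2}$ both tend to infinity, each being the sum of a diverging $\overline{\gamma}_{\mathrm{r}_i}$ and its corresponding $\overline{\zeta}_{\mathrm{r}_i}$ under the strong interference condition). Hence the asymptotic $\min$ in \eqref{eq-csum-asym} is well defined as the minimum of the two asymptotic logarithms, and the proof is completed by invoking Corollary 1 in spirit, i.e.\ by appealing to the same structural argument used there for the OP asymptotics.

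The main obstacle, if any, is not analytical but interpretational: one must justify that the correlation-correction factor $1-\varpi^{\mathrm{r}_i}H_i/(2N_{\mathrm{r}_i})$ remains strictly positive for the regime of interest, so that the argument of the $\log_2$ stays positive as $\overline{\delta}_{\mathrm{r}_i}\to\infty$; this is guaranteed because $H_i=\sum_{n=1}^{N_{\mathrm{r}_i}}1/n\leq 1+\ln N_{\mathrm{r}_i}$ grows only logarithmically in $N_{\mathrm{r}_i}$, whereas the denominator grows linearly, and $\varpi^{\mathrm{r}_i}\leq 1$. I would briefly note this admissibility condition and then state that the asymptotic result follows by straightforward substitution, mirroring the derivation in Corollary \ref{col-out}.
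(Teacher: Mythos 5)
Your proposal is correct and follows essentially the same route as the paper: the authors likewise take the expressions from the EC theorem and apply the high-SNR approximation $\log_2(1+\eta)\approx\log_2(\eta)$ term by term, including inside the $\min\{\cdot,\cdot\}$. Your added remarks on the positivity of the correction factor $1-\varpi^{\mathrm{r}_i}H_i/(2N_{\mathrm{r}_i})$ and on the commutation of the limit with $\min$ are sensible refinements but not a different argument.
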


\begin{proof}
In the high SNR regime, i.e., $\eta\gg 1$, in \eqref{eq-def-c}, we have $\log_2\left(1+\eta\right)\approx\log_2\left(\eta\right)$. Now, by directly applying this approximation into \eqref{eq-c1}--\eqref{eq-csum}, the proof is accomplished. 
\end{proof}

\begin{figure}[!t]
\centering
\includegraphics[width=0.95\columnwidth]{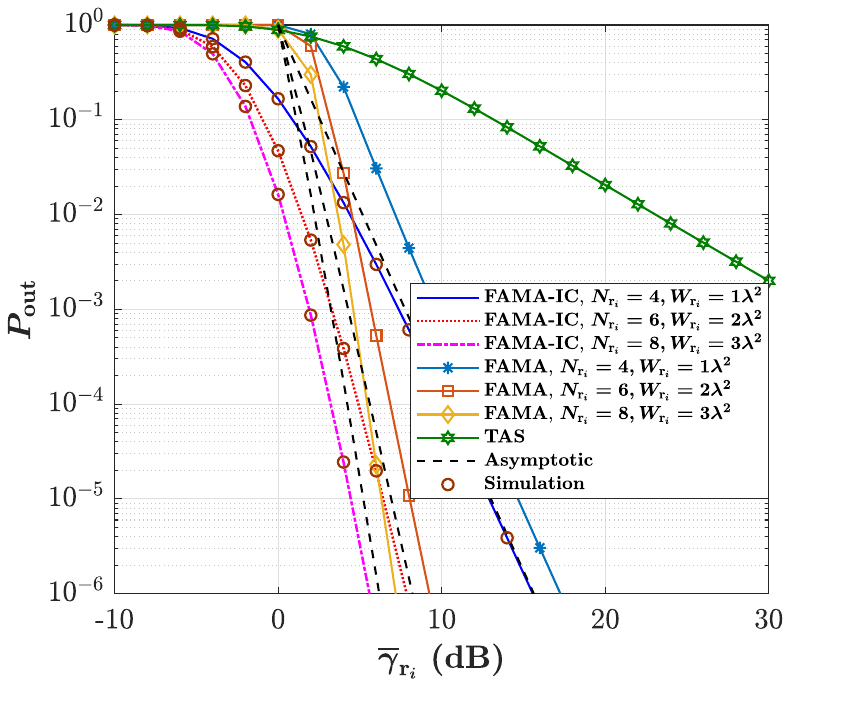}
\caption{OP versus average SNR $\overline{\gamma}_{\mathrm{r}_i}$ for different number of fluid antenna ports, $N_{\mathrm{r}_i}$, and fluid antenna size, $W_{\mathrm{r}_i}$, when $R_1^\mathrm{th}=R_2^\mathrm{th}=0.5$ bits, and  $\overline{\zeta}_{\mathrm{r}_1}=\overline{\zeta}_{\mathrm{r}_2}=\overline{\gamma}_{\mathrm{r}_{\iota}}+20$ dB.}\vspace{0cm}\label{fig-out_gamma}
\end{figure}

\begin{figure}[!t]
\centering
\includegraphics[width=0.95\columnwidth]{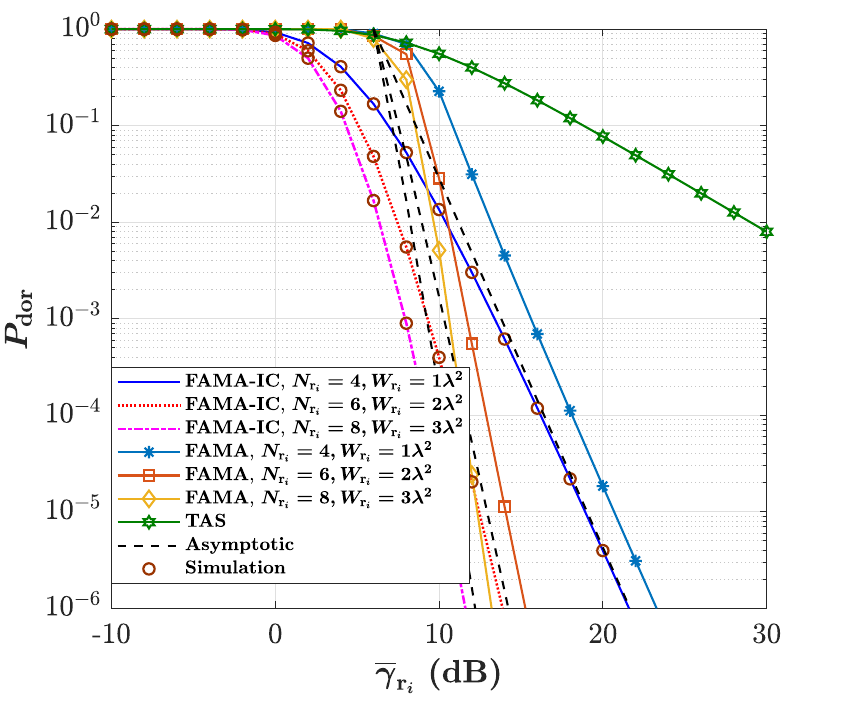}
\caption{DOR versus average SNR $\overline{\gamma}_{\mathrm{r}_i}$ for different number of fluid antenna ports, $N_{\mathrm{r}_i}$, and fluid antenna size, $W_{\mathrm{r}_i}$, when $R_1=R_2=1$ Kbits, $B_1=B_2=1$ MHz,  $T_1^\mathrm{th}=T_2^\mathrm{th}=1$ ms, and  $\overline{\zeta}_{\mathrm{r}_1}=\overline{\zeta}_{\mathrm{r}_2}=\overline{\gamma}_{\mathrm{r}_{\iota}}+20$ dB.}\vspace{0cm}\label{fig-dor_gamma}
\end{figure}

\section{Numerical Results}\label{sec-num}
In this section, we evaluate the analytical derivations, which are double-checked in all instances by Monte Carlo simulation. Figs.~\ref{fig-out_gamma} and \ref{fig-dor_gamma} show the performance of OP and DOR against the average SNR $\overline{\gamma}_{\mathrm{r}_i}$ for different values of fluid antenna size, $W_{\mathrm{r}_i}$, and number of fluid antenna ports, $N_{\mathrm{r}_i}$. First, we see that the asymptotic results closely match the numerical results at high SNRs. Moreover, assuming a strong interference scenario, i.e., $\overline{\zeta}_{\mathrm{r}_i}>\overline{\gamma}_{\mathrm{r}_{\iota}}$, as $\gamma_{\mathrm{r}_i}$ grows, the OP and DOR for both FAS and TAS decrease which is expected since the channel condition and the quality of the desired signal improve. It is also evident that the performance of OP and DOR enhances as $W_{\mathrm{r}_i}$ and $N_{\mathrm{r}_i}$ increase. This improvement is mainly because increasing the fluid antenna size increases the spatial separation between the ports, and therefore, reduces the spatial correlation. Although increasing the number of ports leads to stronger spatial correlation, it can potentially enhance the channel capacity, diversity gain, and spatial multiplexing. In other words, increasing $N_{\mathrm{r}_i}$ while keeping $W_{\mathrm{r}_i}$ fixed causes the ports to become too close to one another, which leads to dominating spatial correlation effects. As a result, the diversity gain diminishes beyond a certain point, and the reduction in OP and DOR slows down, eventually reaching saturation. Thus, by simultaneously increasing $W_{\mathrm{r}_i}$ and $N_{\mathrm{r}_i}$, the spatial correlation between the fluid antenna ports becomes balanced; consequently, a lower OP and DOR is provided. Furthermore, it can be seen that deploying a fluid antenna with only one activated port at users obtains significantly lower OP and DOR compared to TAS in the considered FAMA-IC with strong interference scenario. For instance, for a fixed average SNR $\overline{\gamma}_{\mathrm{r}_1}=10$ dB, the OP for user $\mathrm{r}_i$ under FAMA is in the order of $10^{-4}$, while it is around $10^{-1}$ under TAS. Moreover, by setting $R_i=1$ Kbits, $B_i=1$ MHz, $T_i^\mathrm{th}=1$ ms, and $\overline{\gamma}_{\mathrm{r}_i}$, we can observe that the DOR for user $r_i$ under FAMA is around $10^{-2}$, whereas it is near to $1$ for the TAS counterpart. 

Additionally, a comparison of the results for FAMA-IC and the benchmark FAMA reveals that FAMA-IC outperforms FAMA in terms of both OP and DOR under conditions of strong interference because it leverages fluid antenna diversity to optimize the SINR while employing advanced decoding techniques, such as SND, to mitigate interference. This approach enables the system to maintain reliable data rates by utilizing the structure of the strong interference instead of being fully degraded by it, which would occur if interference were treated as noise. As a result, the increased channel capacity achievable under strong interference reduces the likelihood of falling below the threshold rate, making strong interference advantageous for minimizing OP in FAMA systems.

\begin{figure}[!t]
\centering
\includegraphics[width=0.95\columnwidth]{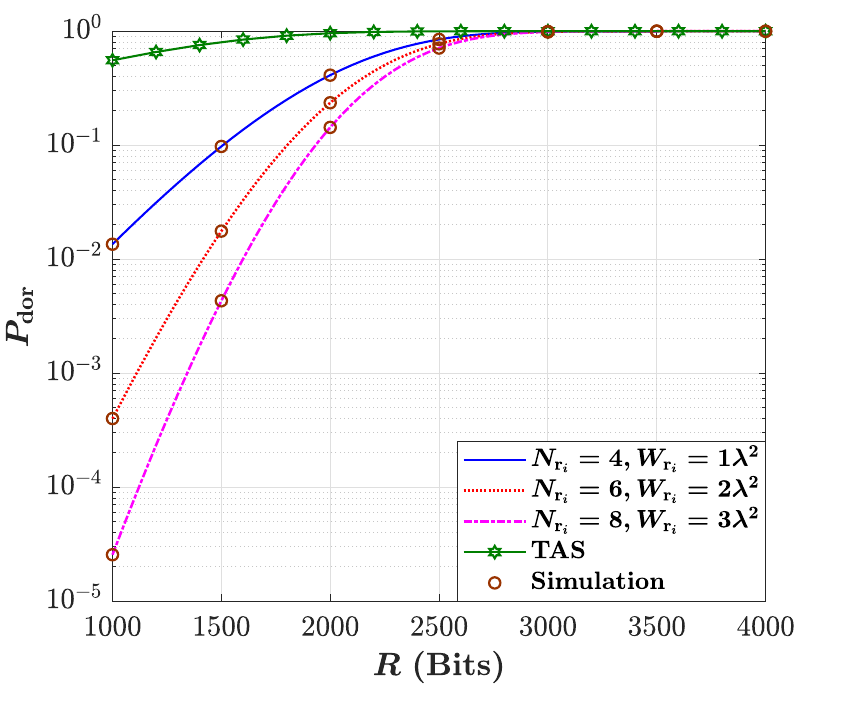}
\caption{DOR versus rate $R$ for different number of fluid antenna ports, $N_{\mathrm{r}_i}$, and fluid antenna size, $W_{\mathrm{r}_i}$, when $B_1=B_2=1$ MHz, $\overline{\gamma}_{\mathrm{r}_i}=10$ dB, and $T_1^\mathrm{th}=T_2^\mathrm{th}=1$ ms.}\vspace{0cm}\label{fig-dor_rate}
\end{figure}

\begin{figure}[!t]
\centering
\includegraphics[width=0.95\columnwidth]{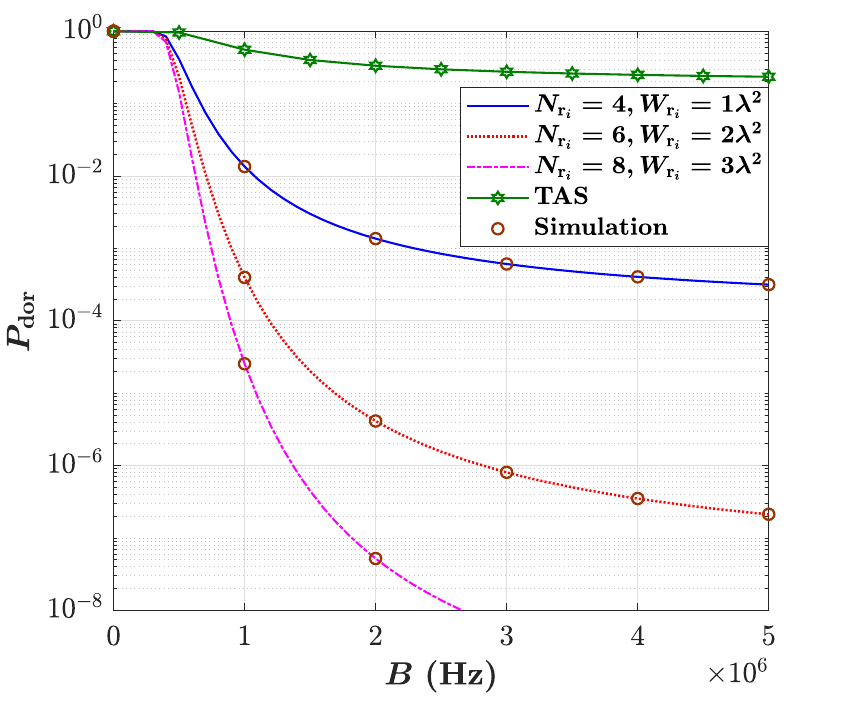}
\caption{DOR versus bandwidth $B$ for different number of fluid antenna ports, $N_{\mathrm{r}_i}$, and fluid antenna size, $W_{\mathrm{r}_i}$, when $R_1=R_2=1$ Kbits, $\overline{\gamma}_{\mathrm{r}_i}=10$ dB, and $T_1^\mathrm{th}=T_2^\mathrm{th}=1$ ms.}\vspace{0cm}\label{fig-dor_band}
\end{figure}\vspace{0cm}

\begin{figure}[!t]
\centering
\includegraphics[width=0.95\columnwidth]{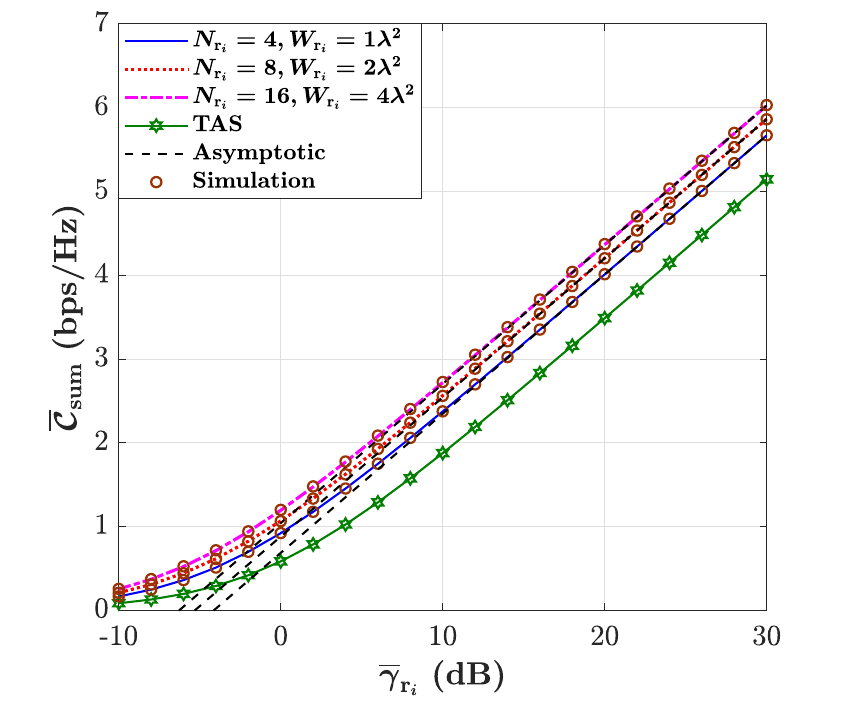}
\caption{EC versus average SNR $\overline{\gamma}_{\mathrm{r}_i}$ for different number of fluid antenna ports, $N_{\mathrm{r}_i}$, and fluid antenna size, $W_{\mathrm{r}_i}$, when $R_1^\mathrm{th}=R_2^\mathrm{th}=0.5$ bits, $\overline{\zeta}_{\mathrm{r}_1}=\overline{\zeta}_{\mathrm{r}_2}=\overline{\gamma}_{\mathrm{r}_{\iota}}+20$ dB.}\vspace{0cm}\label{fig-csum_gamma}
\end{figure}

Assuming $R_i=R$ and $B_i=B$, Fig.~\ref{fig-dor_rate} investigates the impact of increasing the data rate $R$ on the DOR performance under FAMA and TAS deployments in the strong IC. While transmitting a larger amount of data within a fixed bandwidth $B$ is challenging for both FAMA and TAS, it becomes much more feasible when using FAS. For example, sending $R=1.5$ Kbits over a strong IC with $B=1$ MHz is nearly impossible with TAS, but it can be achieved with an acceptable DOR under FAMA. The impact of bandwidth $B$ on the DOR for selected values of $W_{\mathrm{r}_i}$ and $N_{\mathrm{r}_i}$ is presented in Fig.~\ref{fig-dor_band}. It can be seen that increasing $B$ improves the DOR performance for both FAMA and TAS in strong IC, which means that the same amount of data can be transmitted with less delay as the IC bandwidth increases. However, it can be observed that this improvement is much more remarkable under the FAMA deployment than the TAS, thanks to the position reconfigurability, enhanced spatial diversity, and precise interference management provided by FAS. For instance, with a predefined bandwidth of $B=1$ MHz, sending $1$ Kbits of data results in a DOR close to $10^{-5}$ for FAMA and nearly  $1$ for TAS. 

Fig.~\ref{fig-csum_gamma} studies the performance of the sum EC as a function of $\overline{\gamma}_{\mathrm{r}_i}$ for selected values of $W_{\mathrm{r}_i}$ and $N_{\mathrm{r}_i}$. It is evident that the asymptotic results closely align with the analytical derivations at high SNR, confirming the reliability of the theoretical model to predict the system performance. As expected, increasing $W_{\mathrm{r}_i}$ and $N_{\mathrm{r}_i}$ leads to a significant improvement in the sum EC. Even under the challenging strong IC, FAMA outperforms TAS, with the latter showing lower capacity across all SNR values. This highlights the advantage of using FAS with only one activated port and even with smaller sizes, over a fixed single-antenna in interference-dominated environments. However, in strong IC, the EC is constrained by the interference power relative to the signal power. As $W_{\mathrm{r}_i}$ and $N_{\mathrm{r}_i}$ increase, the system can initially mitigate interference more effectively by leveraging spatial diversity. Nevertheless, beyond a certain point, the interference remains dominant, and the system capacity cannot improve further, leading to saturation.

\section{Conclusion}\label{sec-con}
This paper has provided a comprehensive study of FAMA for two-user strong IC, with each mobile user equipped with a FAS. By applying the realistic SND interference management technique, analytical expressions for the CDF and  PDF of the SNR and several key performance metrics, including the OP, DOR, and EC were derived. The results demonstrated significant advantages of FAS, showcasing its ability to improve system performance even with only one activated port under strong interference conditions. These findings contribute to more realistic interference management models in future wireless communication systems and offer valuable insights for optimizing FAMA in practical deployments.


\end{document}